\newtheorem{theo}{Theorem}
\newtheorem{prop}{Proposition}
\newcommand*{\defeq}{\stackrel{\text{def}}{=}}
\begin{document}

\title{\textbf{Multi-view Geometry: Correspondences Refinement Based on Algebraic Properties}}

\author{\vspace{0.5in}\\\textbf{Trung-Kien Le} ~and~ \textbf{Ping Li}\vspace{0.2in}\\
Cognitive Computing Lab\\
 Baidu Research\\
 10900 NE 8th St. Bellevue, WA 98004, USA\\\\
 \{hieukien1207,\ pingli98\}@gmail.com\\\\
}

\date{}

\maketitle

\begin{abstract}
\noindent Correspondences estimation or feature matching is a key step in the image-based 3D reconstruction problem. In this paper, we propose two algebraic properties for correspondences. The first is a rank deficient matrix construct from the correspondences of at least nine key-points on two images (two-view correspondences) and the second is also  another rank deficient matrix built from the other correspondences of six key-points on at least five images (multi-view correspondences). To our knowledge, there are no theoretical results for multi-view correspondences prior to this paper. To obtain accurate correspondences, multi-view correspondences seem to be more useful than two-view correspondences. From these two algebraic properties, we propose an refinement algorithm for correspondences. This algorithm is a combination of correspondences refinement, outliers recognition and missing key-points recovery. Real experiments from the project of reconstructing Buddha statue show that the proposed refinement algorithm can reduce the average error from 77 pixels to 55 pixels  on the correspondences estimation. This drop is substantial and it validates our results.
\end{abstract}


\newpage

\section{Introduction}
Image-based 3D reconstruction is a classical and longstanding ill-pose problem in the computer graphics and computer vision. It easily finds lots of 3D reconstruction applications on robotics and autonomous devices via the \emph{simultaneous localization and mapping} (SLAM) topic~\citep{Wang2007,Cadena2016,Bresson2017}, on medical imaging via the \emph{human organ modeling} problem~\citep{McInerney1996,Heimann2009,Francisco2014}, on physical geography~\citep{Smith2016}, and many others. In addition to wide applications in science, the image-based 3D reconstruction has also received great attention from scientists because of the fun it brings. Imagine how happy computer vision researchers with a passion for art would be when they recreate stunning statues like Venus de Milo and Discobolus and store them as 3D point clouds. Or how happy computer graphics researchers with a passion for architecture would be when they can build some beautiful buildings like the Taj Mahal, the Colosseum, the Forbidden City and the Blue Mosque as 3D point clouds models and everyday enjoy them with a huge screen. With the rapid development of the computer graphics and vision, we believe that the happiness of these researchers will soon become a reality.

The image-based 3D reconstruction of an object can be seen as a complete process starting from multiple 2D images and ending with modeled 3D point clouds of this object. In general, this complete process is a combination of the five stages that (i) the key-points or the features detection~\citep{Lowe1999,Lowe2004,Yan2004,Mikolajczyk2004,Bay2006}, (ii) the point correspondences estimation, the feature matching or the point set registration~\citep{Munkres1957,Myronenko2010,Torresani2012,Yan2016,Swoboda2019}, (iii) the camera calibration or the projection matrix estimation~\citep{Weng1992,Wei1994,Zhang2000,Remondino2006,Chuang2021}, (iv) the triangulation~\citep{Hartley1997triangulation,Stewenius2005,yang2019,Sharp2020}, and (v) the bundle adjustment~\citep{Triggs1999,Agarwal2010,Zach2014}. These five stages are mutually dependent. Depending on the reconstructed object, the features detecting by the first stage can be points~\citep{Hsieh1996}, edges~\citep{Ziou1998,dollar2013}, corners~\citep{Zheng1999,Rosten2006}, circles~\citep{Alhichri2003}, or special curves. Methods for matching features are depended on the numbers and the characteristics of the features. For instance, when the numbers of features are in the tens or hundreds, the methods in the `graph matching' group~\citep{Munkres1957,Zhou2013,Swoboda2019} are prefer. However, when these numbers are in the thousands or more, the methods in the `point set registration' group~\citep{Jian2010,Myronenko2010,Zhou2019} seem better. If the features are too many, the evaluating `point-match-point' should be naturally replaced with `pointsss-match-pointsss' where `pointsss' means a lot of points. In view of this, some probabilistic models such as Gaussian mixture models and Bayesian mixture models are used to characterize these evaluated pointsss. Of the five stages of image-based 3D reconstruction process, the correspondences estimation or the feature matching is the most difficult and has been received the most attention from researchers. For example, the well-known formulation of the graph matching problem is the \emph{quadratic assignment problem} (QAP) known as NP-hard~\citep{Lawler1963}. In addition, the feature matching across multiple images is still a challenging task although several potential methods exist for this on two images.

\begin{figure*}[t]
  \centering
  \includegraphics[width=0.8\linewidth]{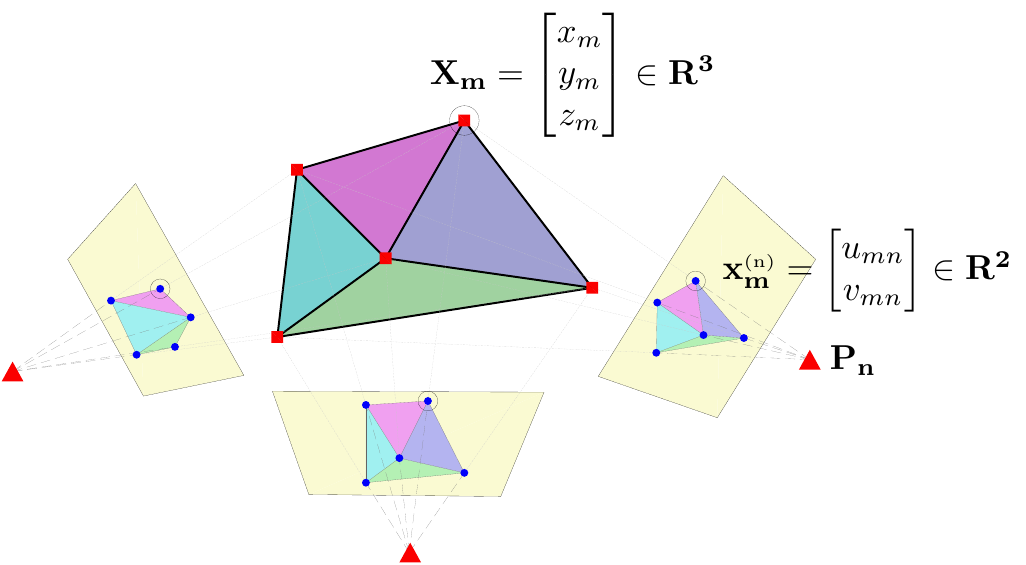}
  \caption{Multi-view geometry: `\emph{Red squares}' are the world points, `\emph{red triangles}' are the centers of projection (COPs). They are unknown and from that \emph{camera calibration} means estimating the positions of the red triangles and \emph{3D reconstruction} means estimating the red squares' positions. `\emph{Light yellow planes}' are the image and `\emph{blue circles}' are the image points. They are known. The three blue circles marked by `\emph{black circles}' are three-view correspondence of the world point ${\bf X}_m$.}
  \label{Fig:Multi_view_geo}
\end{figure*}

Using the correspondences from the first and second stages, the fundamental matrices of each pair images are computed by the \emph{eight-point algorithm}~\citep{Hartley2004}, the \emph{five-point algorithms}~\citep{Nister2004,Barath2018} or others. Fixing the projection matrix of the reference image, other projection matrices of other images are determined by these fundamental matrices. From the projection matrices, positions of camera centers are obtained. This is a reason the third stage named by the `camera calibration'. The final 3D point clouds are calculated from the projection matrices and refined by the triangulation and bundle adjustment stages.

Set in a vast forest with a constant fear of getting lost, this paper attempts to study 3D reconstruction from the basic and theoretical point. First, we mention that if the correspondences from the second stage are very good, some of the existing methods in the camera calibration, the triangulation and the bundle adjustment can give us precise 3D point clouds. The interesting thing is that we don't need too many, just in dozens of good correspondences. Second, we have a question about getting dozens of good correspondences from images. Of course, good features or good key-points bring us good correspondences. A `good key-point' means it is very different from other points. In other words, a key-point should have enough characteristics to be recognized. In view of this, we prefer to study the image-based 3D reconstruction on multiple images rather than just two images. From that the characteristics of the key-points are built based on multiple images, not only two images. More precisely, based on the \emph{multi-view geometry} theory for the image-based 3D reconstruction~\citep{Hartley2004}, we study the algebraic properties of the correspondences. Two rank properties, one for the correspondences of at least nine key-points on two images and another for the correspondences of six key-points on at least five images, are derived. These two rank properties give us two constraints among correspondences. They are help us not only on refining correspondences, but also on recognizing outliers and recovering missing key-points.

The applications of our proposed results, i.e., refining correspondences, recognizing outliers and recovering missing key-points are validated by some realistic experiments from our small project on 3D reconstructing the Buddha statue. In these experiments, we manually determine the ground-truths for the correspondences. Therefore, all our outcome evaluations based on the differences between the ground-truths and the estimations, the refinements are reliable. From twenty one experiments, the average error between the ground-truths and the estimated correspondences are 77 pixels. After applying our proposed refinement algorithm, this average error is 55 pixels. This drop is really dramatic and it validates our results.\\

\textbf{Roadmap.}\ In Section~\ref{sec:multi-view_geometry}, we introduce the multi-view geometry problem, the $N$-view and $M$-point camera calibration, the importance of accurate correspondences. Then Section~\ref{sec:rank_properties} derives the two rank properties for correspondences,  presented by Theorem~\ref{Thm:RanktwoView} and Theorem~\ref{Thm:RankfiveView}, which are the main results of the paper. In Section~\ref{sec:refinement}, we apply these two rank properties and present the main correspondences refinement algorithm (Algorithm~\ref{Alg:Main}). Section~\ref{sec:evaluation} provides and evaluates results, and finally Section~\ref{sec:conclusion} concludes this article.


\section{Multi-view Geometry}\label{sec:multi-view_geometry}

The \emph{multi-view geometry} is illustrated by Figure~\ref{Fig:Multi_view_geo}. There are $M$ unknown world points ${\bf X}_1, {\bf X}_2, \ldots, {\bf X}_M$ in the three-dimensional space $\mathbb{R}^3$. The three real numbers $x_m, y_m, z_m$ are used to indicate the three coordinates of the world point ${\bf X}_m$. These $M$ world points are captured by $N$ images. The $n$th image capturing the world points  means that there is a $3\times 4$ projection matrix ${\bf P}_n$, formulated by
\begin{equation}\label{eq01}
{\bf P}_n \defeq \begin{bmatrix}p_{1n} & p_{2n} & p_{3n} & p_{4n}\\
p_{5n} & p_{6n} & p_{7n} & p_{8n}\\
p_{9n} & p_{10n} & p_{11n} & p_{12n}\end{bmatrix}_{3\times 4\, ,}
\end{equation}
to map all the world points to a plane which is called the $n$th \emph{image-plane}. Using the homogeneous coordinate system, the mapping from the world point ${\bf X}_m = [x_m,y_m,z_m]^T$ to the image point ${\bf x}^{\text{\tiny (n)}}_m \defeq [u_{mn},v_{mn}]^T$ by the projection matrix ${\bf P}_n$ is presented mathematically as follows
\begin{equation}\label{eq02}
{\bf P}_n\begin{bmatrix}{\bf X}_m\\ 1\end{bmatrix} = \begin{bmatrix}\omega{\bf x}^{\text{\tiny (n)}}_m\\\omega\end{bmatrix}\,,
\end{equation}
or
\begin{equation}\label{eq03}
\begin{bmatrix}p_{1n} & p_{2n} & p_{3n} & p_{4n}\\
p_{5n} & p_{6n} & p_{7n} & p_{8n}\\
p_{9n} & p_{10n} & p_{11n} & p_{12n}\end{bmatrix}\begin{bmatrix}x_m\\y_m\\z_m\\1\end{bmatrix} = \begin{bmatrix}\omega u_{mn}\\\omega v_{mn}\\\omega\end{bmatrix}\, ,
\end{equation}
where $\omega$ is an unknown scale parameter. Removing the parameter $\omega$, Eq.~\eqref{eq03} yields the following constraints
\begin{equation}\label{eq04}
\begin{split}
p_{1n}x_m &+ p_{2n}y_m + p_{3m}z_m + p_{4n}
= u_{mn}(p_{9n}x_m + p_{10n}y_m + p_{11n}z_m + p_{12n}),\\[1mm]
p_{5n}x_m &+ p_{6n}y_m + p_{7m}z_m + p_{8n}
= v_{mn}(p_{9n}x_m + p_{10n}y_m + p_{11n}z_m + p_{12n})\, .
\end{split}
\end{equation}
These constraints are the basic relationships between the world points and key points via the projection~matrix.

\subsection{Correspondences}
Applications of multi-view geometry can be easily found in computer vision and computer graphics. Most problems in computer vision and computer graphics use images as the first information to exploit. From these images, there are lots of potential methods such as SURF~\citep{Bay2006}, SIFT~\citep{Lowe1999}, PCA-SIFT~\citep{Yan2004}, GLOH~\citep{Mikolajczyk2004} to extract \emph{features} or \emph{key-points}. If ${\bf x}^{\text{\tiny (n)}}_m$ is an image point of the world point ${\bf X}_m$ on the $n$th image for $n = 1,2,\ldots, N$, the sequence of $N$ image points $\{{\bf x}^{\text{\tiny(1)}}_m, {\bf x}^{\text{\tiny(2)}}_m,\ldots, {\bf x}^{\text{\tiny(N)}}_m\}$ is called a \emph{correspondence} and denoted by
\begin{equation}\label{eq05}
\big\{{\bf x}^{\text{\tiny (1)}}_m \,\leftrightarrow\, {\bf x}^{\text{\tiny (2)}}_m \,\leftrightarrow\, \cdots \,\leftrightarrow\, {\bf x}^{\text{\tiny (N)}}_m\big\}\, .
\end{equation}
Sometimes, we use an $N$\emph{-correspondence} with meaning that this correspondence is from $N$ images. An example of the correspondence or three-correspondence is found in Figure~\ref{Fig:Multi_view_geo} from the three blue circles (image points) and one red square (a world point) marked by the black circles. It is easy to see an important property of the correspondence via the centers of cameras (the red triangles) is that all the lines through the point in the correspondence and their centers of cameras will intersect at one point. This point is the  world point corresponding to this correspondence. In three-dimensional space, the probability of two lines intersecting at a point is close to one. But a chance for three, four or more lines intersecting at a point is very close to zero. Hence, this characteristic illuminates a role to have highly accurate estimations of correspondences.

\subsection{$N$-view and $M$-point Camera Calibration}
From $N$ images ${\bf P}_1, {\bf P}_2, \ldots, {\bf P}_N$, using SURF, SIFT or GLOH, we assume that the $N$-correspondences of $M$ world points ${\bf X}_1, {\bf X}_2, \ldots, {\bf X}_M$ are determined. They are
\begin{equation}\label{eq06}
\big\{{\bf x}^{\text{\tiny (1)}}_m \,\leftrightarrow\, {\bf x}^{\text{\tiny (2)}}_m \,\leftrightarrow\, \cdots \,\leftrightarrow\, {\bf x}^{\text{\tiny (N)}}_m\big\}^M_{m=1} \subset \mathbb{R}^2\, ,
\end{equation}
where ${\bf x}^{\text{\tiny (n)}}_m = [u_{mn},v_{mn}]^T$. When the correspondences are acquired, a lot of researchers in the computer vision naturally have had a big attention on the estimation of the world points or the projection matrices based on their correspondences. This problem is called the \emph{$N$-view and $M$-point camera calibration} and it is a prerequisite in many applications of computer vision.

Formally, the $N$-view and $M$-point camera calibration is stated as follows. Given $M$ correspondences $\{{\bf x}^{\text{\tiny(1)}}_m\leftrightarrow{\bf x}^{\text{\tiny(2)}}_m\leftrightarrow\cdots\leftrightarrow{\bf x}^{\text{\tiny (N)}}_m\}^M_{m=1}$, finding $M$ world points $\{{\bf X}_m\}^M_{m=1}$ and $N$ projection matrices $\{{\bf P}_n\}^N_{n=1}$ such that Eq.~\eqref{eq02} or Eq.~\eqref{eq04} satisfy for all $m,n$. Then we call the group $\{{\bf X}_m,{\bf P}_n\}$ is a \emph{solution} of the $N$-view and $M$-point with respect to $\{{\bf x}^{\text{\tiny(1)}}_m\leftrightarrow{\bf x}^{\text{\tiny(2)}}_m\leftrightarrow\cdots\leftrightarrow{\bf x}^{\text{\tiny(N)}}_m\}$.

On finding the solutions for the $N$-view, $M$-point camera calibration, we shall notice about its ambiguity which is explained as follows. Consider the \emph{projective transformation} ${\bf H}$ on the three-dimensional space, where  ${\bf H}$ is a non-singular $4\times 4$ matrix with 15 degrees of freedom~\citep{Hartley2004}. We let the new world points $\widehat{\bf X}_m$ and the new projection matrices $\widehat{\bf P}_n$ be
\begin{equation}\label{eq07}
\begin{bmatrix}\widehat{\bf X}_m\\1\end{bmatrix} = {\bf H}\begin{bmatrix}{\bf X}_m\\ 1\end{bmatrix}\quad \text{and} \quad \widehat{\bf P}_n = {\bf P}_n{\bf H}^{-1},
\end{equation}
for all $m,n$. This means that
\begin{equation}\label{eq08}
\widehat{\bf P}_n\begin{bmatrix}\widehat{\bf X}_m\\1\end{bmatrix} = {\bf P}_n\begin{bmatrix}{\bf X}_m\\1\end{bmatrix}\,
\end{equation}
Thus, if $\{{\bf X}_m, {\bf P}_n\}$ is a solution w.r.t. $\{{\bf x}^{\text{\tiny (1)}}_m\leftrightarrow{\bf x}^{\text{\tiny (2)}}_m\leftrightarrow\cdots\leftrightarrow{\bf x}^{\text{\tiny(N)}}_m\}$, $\{\widehat{\bf X}_m,\widehat{\bf P}_n\}$ is also another solution w.r.t. $\{{\bf x}^{\text{\tiny (1)}}_m\leftrightarrow{\bf x}^{\text{\tiny (2)}}_m\leftrightarrow\cdots\leftrightarrow{\bf x}^{\text{\tiny(N)}}_m\}$. In other words, the $N$-view, $M$-point camera calibration is invariant under a projective transformation.

\subsection{Accuracy of Correspondences}

Correspondences will be the input of the $N$-view and $M$-point camera calibration. Naturally, we have an interesting question about an effect of the correspondences estimation on the solutions of the $N$-view and $M$-point camera calibration. To answer this question we study the $N$-view and $M$-point problem from a small real experiment on reconstructing the Buddha statue. As shown in Figure~\ref{Fig:4View13Points}, there are thirteen world points on the Buddha's face and they are marked by some special symbols. Distances between two of them are manually measured carefully, and from that their relative positions are determined by using the \emph{multidimensional scale} (MDS)~\citep{Torgerson1952,Schonemann1970}. Hence, ground truths for the thirteen world points are acquired. The Buddha's face, together with the thirteen world points, are captured by four images. Since the world points are marked by the special symbols, their correspondences are easily determined with high accuracy. They are indicated by the \emph{red plus symbols} as shown in Figure~\ref{Fig:4View13Points}. From the ground truths for the world points and the image points, when the number of the world points capturing by the $n$th image, $M_0$, is at least six, the projection matrix of this image is derived by the following~linear~equation
\begin{equation}\label{eq09}
{\bf K}^T\begin{bmatrix}p_{1n}\\p_{2n}\\\vdots\\p_{12n}\end{bmatrix}_{12\times 1} \hspace{-2mm}=~~ {\bf 0}_{2M_0\times 1\, ,}
\end{equation}
where ${\bf K}$ is a $12\times2M_0$ matrix constructed by the coordinates of the world points ($x_m, y_m, z_m$) and the image points ($u_{mn}, v_{mn}$) as follows
\begin{align}\label{eq10}
{\bf K}\defeq \begin{bmatrix}x_1 & 0 & \ldots & x_{M_0} & 0\\
                             y_1 & 0 & \ldots & y_{M_0} & 0\\
                             z_1 & 0 & \ldots & z_{M_0} & 0\\
                             1 & 0 & \ldots & 1 & 0\\
                             0 & x_1 & \ldots & 0 & x_{M_0}\\
                             0 & y_1 & \ldots & 0 & y_{M_0}\\
                             0 & z_1 & \ldots & 0 & z_{M_0}\\
                             0 & 1 & \ldots & 0 & 1\\
                             -u_{1n}x_1 & -v_{1n}x_1 & \ldots & -u_{M_0n}x_{M_0} & -v_{M_0n}x_{M_0}\\
                             -u_{1n}y_1 & -v_{1n}y_1 & \ldots & -u_{M_0n}y_{M_0} & -v_{M_0n}y_{M_0}\\
                             -u_{1n}z_1 & -v_{1n}z_1 & \ldots & -u_{M_0n}z_{M_0} & -v_{M_0n}z_{M_0}\\
                             -u_{1n} & -v_{1n} & \ldots & -u_{M_0n} & -v_{M_0n}\end{bmatrix}_{12\times 2M_0\, .}
\end{align}

\begin{figure}[t]
  \centering
  \includegraphics[width=3.5in]{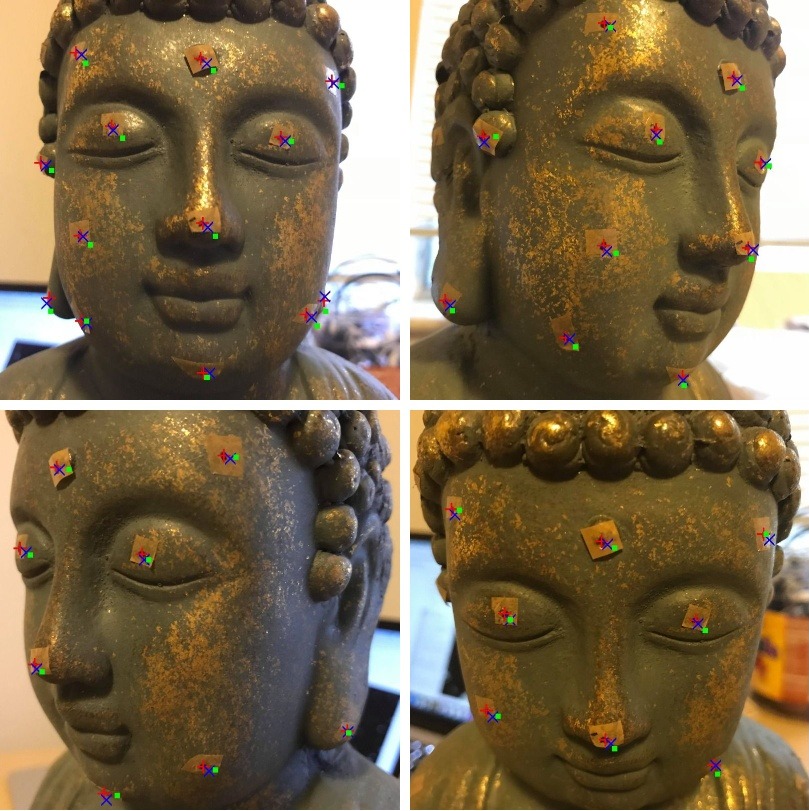}
  \caption{An example of the four-view and thirteen-point camera calibration. The \emph{red plus symbols} are the ground truths, the \emph{blue cross symbols} and \emph{green squares} are estimations of the image points with small and big noises respectively.}
  \label{Fig:4View13Points}
\end{figure}

\begin{figure}[b!]
  \centering
  \includegraphics[width=3.5in]{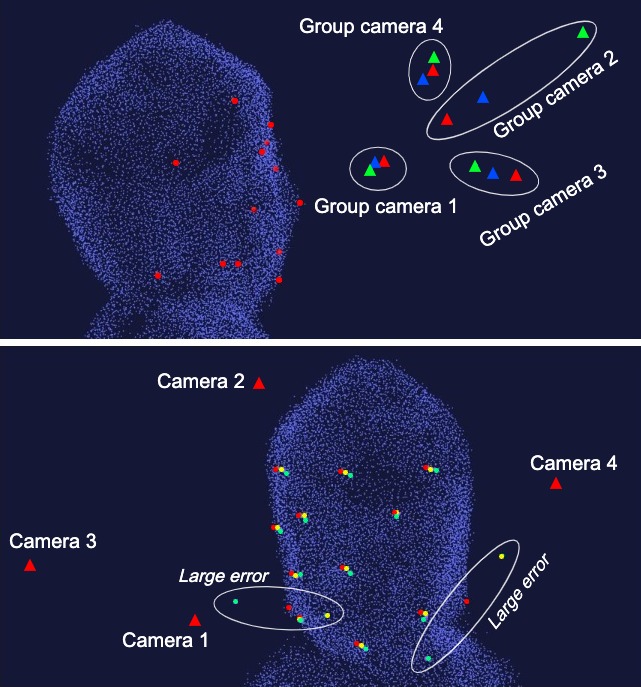}
  \caption{Estimations of the centers of cameras and the world points corresponding to the ground truths and estimated image points with small and big noises in Figure~\ref{Fig:4View13Points}.}
  \label{Fig:Accuracy}
\end{figure}

The centers of cameras are derived from their projection matrices~\citep{Hartley2004}. Summarily, we emphasize that with this experiment the ground truths for the world points, correspondences and centers of cameras are determined with high accuracy. The ground truths for the thirteen world points and four centers of camera are simulated by the \emph{red circles} and \emph{red triangles} in Figure~\ref{Fig:Accuracy}.

Next we study the effect of the correspondences estimation on the $N$-view and $M$-point problem by corrupting the ground truths of the correspondences and seeing how the world points and centers of cameras change. Two different level noises (small and big) are used to corrupt the image points. The first corruption with small noises changes the ground truths indicated by the \emph{red plus symbols} to the estimated image points indicated by the \emph{blue cross symbols}. The second one with big noises gives other estimated image points indicated by \emph{green squares}. Figure~\ref{Fig:4View13Points} simulates clearly the differences among the ground truths (red plus), estimated image points with small noises (blue cross) and estimated image points with big noises (green squares) on each image.

The above sub-figure of Figure~\ref{Fig:Accuracy} presents the ground truth and estimated centers of cameras for the two above corruptions. The red, blue and green triangles are the ground truths, estimations with small noises and estimations with big noises, respectively. To obtain the blue triangles for the estimated centers of cameras with small noises, we use the linear equation Eq.~\eqref{eq09} in which the matrix ${\bf K}$ is constructed by the ground truths of the world points and the estimated image points indicated by the blue cross in each images. Similarly, we replace the blue estimated image points to the green ones on constructing ${\bf K}$ to get the estimations of the green triangles corresponding to the big noises corruption. The below sub-figure presents the ground truths and estimated world points. The red, yellow and green dots are the ground truths, estimated world points with small noises and estimated one with big noises, respectively. The yellow dots are determined based on the blue cross image points and the ground truths of the centers of cameras that are indicated by the red triangles. Formally, a formula for the world point ${\bf X}_m = [x_m,y_m,z_m]^T$ from the $N$ images $\{{\bf P}_n\}$ and $N$ image points $\{{\bf x}^{\text{\tiny (n)}}_m = [u_{mn},v_{mn}]^T\}$ given by
\begin{equation}\label{eq11}
{\bf H}^T\begin{bmatrix}x_m\\y_m\\z_m\\1\end{bmatrix}_{4\times 1}\hspace{-2mm}=~~{\bf 0}_{2N\times 1\, ,}
\end{equation}
where ${\bf H}$ is the following $4\times 2N$ matrix
\begin{align}\label{eq12}
{\bf H}\defeq
\begin{bmatrix}
p_{1,1}-u_{m1}p_{9,1} & p_{5,1}-v_{m1}p_{9,1} & \ldots & p_{5N}-v_{mN}p_{9N}\\[1mm]
p_{2,1}-u_{m1}p_{10,1} & p_{6,1}-v_{m1}p_{10,1} & \ldots & p_{6N}-v_{mN}p_{10N}\\[1mm]
p_{3,1}-u_{m1}p_{11,1} & p_{7,1}-v_{m1}p_{11,1} & \ldots & p_{7N}-v_{mN}p_{11N}\\[1mm]
p_{4,1}-u_{m1}p_{12,1} & p_{8,1}-v_{m1}p_{12,1} & \ldots & p_{8N}-v_{mN}p_{12N}
\end{bmatrix}_{4\times 2N\, .}
\end{align}
The green dots are determined similarly as the yellow ones when we replace the blue cross estimated image points by the green square estimations.

All four images in Figure~\ref{Fig:4View13Points} have the size $3024\times 3024$ pixels. The average differences between the red pluses (ground truths) and the blue crosses (the estimations with small noises) is around 40 pixels, and the similar average between the red pluses and the green squares is around 80 pixels. Intuitively, these differences are small but the errors on estimating the second center of camera and the two world points located on two ears of the Budhha's face are too large. Note that the solutions presented in Figure~\ref{Fig:Accuracy} are found within some extra information, i.e., the centers of cameras are found based on knowing the world points, and vice versa the world points are determined based on knowing the centers of cameras. So, we cannot expect better solutions for the $N$-view and $M$-motion camera calibration if the noises on the estimations of correspondences are similar as Figure~\ref{Fig:4View13Points}. Indeed, to solve some potential applications of the $N$-view and $M$-point problem such as the near filed 3D reconstruction, we should have better results than ones shown in Figure~\ref{Fig:4View13Points} and Figure~\ref{Fig:Accuracy}.

Some recent realistic applications in computer vision~\citep{Hartley2004,Zheng2017,Zollhofer2018} show that we can get good solutions for the $N$-view and $M$-point problem when we (i) exploit more information from the cameras, (ii) improve SURF, SIFT, etc. to get better features or interested points, or/and (iii) increase the number of images. In this paper, we shall propose a different method. Given the correspondences, we shall refine them before applying some existent algorithms to derive the estimations of the world points and the centers of cameras. Our refinement only bases on the geometrical characteristics of the world points, the correspondences and the centers of cameras in the multi-view geometry. This is an advantage and from that we can combine our method with other methods to get better results. This idea that improving the observations based on their algebraic or geometrical properties can be found in~\citep{Velasco2016,Kien2018} on studying the signal-based localization.

\section{Rank Properties for Correspondences}\label{sec:rank_properties}
This section presents the main contribution of the paper. Depending on the number of views $N$ and the number of key-points $M$, an upper bound rank for a matrix constructed by the correspondences is derived. The upper bound ranks of these matrices give us some linear constraints among the elements of these matrices. Thus, if these elements are simple functions of the correspondences, these linear constraints will help us on recognizing the outliers and refine the correspondences.

The first contribution of the paper is in Theorem~\ref{Thm:RanktwoView} from that a \emph{rank deficient $M\times 9$ matrix} is constructed from $M$ correspondences of two images with $M\geq 9$. Because of the rank deficient property, there is a linear constraint among this matrix's elements. Interestingly, since all the four times $M$ coordinates of the $M$ correspondences are in these elements, we shall use the rank deficient property given by Theorem~\ref{Thm:RanktwoView} to refine these $4M$ coordinates. Theorem~\ref{Thm:RankfiveView}, proposing another \emph{rank deficient $N\times 5$ matrix} constructed from six correspondences of $N$ images with $N\geq 5$, is the second contribution of the paper. Unlike Theorem~\ref{Thm:RanktwoView}, the elements of the $N\times 5$ matrix in Theorem~\ref{Thm:RankfiveView} are quartic functions of $12N$ coordinates from six correspondences in $N$ images. Thus, it is difficult to get a good algorithm to refine these correspondences based on the rank deficient property given by Theorem~\ref{Thm:RankfiveView}. However, as shown in the next section, this rank deficient property is efficient to \emph{outliers refinement} and \emph{self-estimation} in correspondences estimations.

The result proposed by Theorem~\ref{Thm:RanktwoView} is correct for all two-view and $M$-point with $M \geq 9$, while the one in Theorem~\ref{Thm:RankfiveView} is correct for all $N$-view and six-point with $N\geq 5$. Hence when the number of correspondences is large but the number of images is small, we shall use Theorem~\ref{Thm:RanktwoView} in recognizing the outliers and refining the correspondences. Vice versa, when the number of correspondences is small and the number of images is larger, Theorem~\ref{Thm:RankfiveView} can be used. Note that the results in Theorem~\ref{Thm:RanktwoView} and Theorem~\ref{Thm:RankfiveView} cannot work for the cases of the number of correspondences be smaller than nine and the number of images be smaller than five.

Before introducing the two main theorems we mention that we shall use some simple notations for the cases of two and three-view. More precisely, in the two-view case we shall use $\{{\bf x}_m\leftrightarrow{\bf x}'_m\}$ and $\{{\bf P}, {\bf Q}\}$ to denote the correspondence and the projection matrices, and in the three-view case we use $\{{\bf x}_m\leftrightarrow{\bf x}'_m\leftrightarrow{\bf x}''_m\}$ and $\{{\bf P}, {\bf Q}, {\bf R}\}$.

\subsection{Rank Property for the Two-View}
The first result of this paper for the two-view and $M$-point camera calibration is given by the following theorem.
\begin{theo}\label{Thm:RanktwoView}
Let
\begin{equation}\label{eq13}
\bigg\{{\bf x}_m \defeq \begin{bmatrix}u_m\\v_m\end{bmatrix}~\leftrightarrow~{\bf x}'_m\defeq\begin{bmatrix}u'_m\\v'_m\end{bmatrix}\bigg\}^M_{m=1}
\end{equation}
be $M$ correspondences on two images. Then the following $M\times 9$ matrix ${\bf \Gamma}$ given by
\begin{equation}\label{eq14}
\begin{bmatrix}
1 & u_1 & v_1 & u'_1 & v'_1 & u_1u'_1 & u_1v'_1 & v_1u'_1 & v_1v'_1\\[1mm]
1 & u_2 & v_2 & u'_2 & v'_2 & u_2u'_2 & u_2v'_2 & v_2u'_2 & v_2v'_2\\[1mm]
1 & u_3 & v_3 & u'_3 & v'_3 & u_3u'_3 & u_3v'_3 & v_3u'_3 & v_3v'_3\\[1mm]
\vdots & \vdots & \vdots & \vdots & \vdots & \vdots & \vdots & \vdots & \vdots\\[1mm]
1 & \hspace{-0.5mm}u_M\hspace{-0.5mm} & \hspace{-0.5mm}v_M\hspace{-0.5mm} & \hspace{-0.5mm}u'_M\hspace{-0.5mm} & \hspace{-0.5mm}v'_M\hspace{-0.5mm} & \hspace{-0.5mm}u_Mu'_M\hspace{-0.5mm} & \hspace{-0.5mm}u_Mv'_M\hspace{-0.5mm} & \hspace{-0.5mm}v_Mu'_M\hspace{-0.5mm} & \hspace{-0.5mm}v_Mv'_M
\end{bmatrix}_{M\times 9}
\end{equation}
will have the rank at most 8.
\end{theo}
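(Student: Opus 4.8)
The plan is to produce one explicit nonzero vector lying in the right null space of ${\bf \Gamma}$; then rank--nullity immediately forces $\mathrm{rank}({\bf \Gamma}) \le 9 - 1 = 8$. The vector I have in mind is assembled from the entries of the \emph{fundamental matrix} of the two-view pair $\{{\bf P},{\bf Q}\}$, and the single linear relation among the nine columns of ${\bf \Gamma}$ that it encodes is exactly the classical \emph{epipolar constraint} \citep{Hartley2004}. The conceptual key is to recognize that the nine columns of ${\bf \Gamma}$ are precisely the nine bilinear monomials in $(u_m,v_m,1)$ and $(u'_m,v'_m,1)$, so that a \emph{universal} (i.e.\ $m$-independent) bilinear identity among the correspondences becomes a single linear dependence among the columns.

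To keep the argument self-contained I would first re-derive the epipolar constraint from Eq.~\eqref{eq04}. By the definition of a correspondence, each pair $\{{\bf x}_m \leftrightarrow {\bf x}'_m\}$ consists of the images of one common world point ${\bf X}_m$ under the two fixed matrices ${\bf P},{\bf Q}$. Writing $\bar{\bf x}_m \defeq [u_m,v_m,1]^T$ and $\bar{\bf x}'_m \defeq [u'_m,v'_m,1]^T$ and letting $\omega,\omega'$ be the two scale factors, the two projection relations stack into
\begin{equation*}
\begin{bmatrix}{\bf P} & -\bar{\bf x}_m & {\bf 0}\\ {\bf Q} & {\bf 0} & -\bar{\bf x}'_m\end{bmatrix}_{6\times 6}\begin{bmatrix}{\bf X}_m\\ 1\\ \omega\\ \omega'\end{bmatrix} = {\bf 0}.
\end{equation*}
The solution vector on the right is nonzero (its fourth entry equals $1$), so the $6\times 6$ coefficient matrix is singular and its determinant vanishes. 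Expanding that determinant along its last two columns shows it is \emph{linear} in $\bar{\bf x}_m$ and \emph{linear} in $\bar{\bf x}'_m$ separately; hence the vanishing determinant takes the bilinear form $\bar{\bf x}'^T_m\, {\bf F}\, \bar{\bf x}_m = 0$ for a single $3\times 3$ matrix ${\bf F} = [f_{ij}]$ whose entries are minors of ${\bf P}$ and ${\bf Q}$ and therefore do not depend on $m$.

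The last step is pure bookkeeping: expanding $\bar{\bf x}'^T_m\, {\bf F}\, \bar{\bf x}_m$ and collecting monomials gives
\begin{equation*}
f_{11}u_mu'_m + f_{12}v_mu'_m + f_{13}u'_m + f_{21}u_mv'_m + f_{22}v_mv'_m + f_{23}v'_m + f_{31}u_m + f_{32}v_m + f_{33} = 0,
\end{equation*}
which is exactly the $m$th row of ${\bf \Gamma}$ paired against the fixed vector ${\bf f} \defeq [f_{33},\,f_{31},\,f_{32},\,f_{13},\,f_{23},\,f_{11},\,f_{21},\,f_{12},\,f_{22}]^T$. Thus ${\bf \Gamma}{\bf f} = {\bf 0}$. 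The one point that needs care --- and the only place where geometry, rather than linear algebra, enters --- is verifying that ${\bf f}\neq{\bf 0}$, i.e.\ that ${\bf F}\neq{\bf 0}$; this I expect to be the main obstacle. It holds for any genuine two-view configuration, since the fundamental matrix degenerates to the zero matrix only when the two camera centers coincide, a case in which ${\bf x}'_m={\bf x}_m$ for all $m$ and the rank bound holds trivially by column repetition. Granting ${\bf F}\neq{\bf 0}$, the vector ${\bf f}$ is a nonzero element of $\ker{\bf \Gamma}$, so $\mathrm{rank}({\bf \Gamma})\le 8$, as claimed.
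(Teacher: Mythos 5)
Your proof is correct and lands on the same algebraic fact as the paper --- a single $m$-independent bilinear identity in $(u_m,v_m,1)$ and $(u'_m,v'_m,1)$ whose nine coefficients form a null vector of ${\bf \Gamma}$ --- but you reach it by a genuinely different and cleaner route. The paper eliminates the scale factor first, obtains the $4\times 4$ matrix ${\bf K}$ of Eq.~\eqref{eq17}, sets $\det({\bf K})=0$, and then must show by explicit expansion that seven of the sixteen possible monomial coefficients vanish (Eq.~\eqref{eq20}); it verifies only $a_5=0$ by hand (Eq.~\eqref{eq21}) and asserts the rest. You instead keep $\omega,\omega'$ as unknowns, so that $\bar{\bf x}_m$ and $\bar{\bf x}'_m$ each occupy a single column of a $6\times 6$ singular matrix, and bilinearity of the vanishing determinant follows structurally from multilinearity of the determinant in its columns --- all seven cancellations of Eq.~\eqref{eq20} come for free, and the coefficient matrix is identified as the classical fundamental matrix. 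You also explicitly confront the non-vanishing of the coefficient vector, which the paper simply asserts after Eq.~\eqref{eq23}; this is to your credit, since that is the one genuinely geometric step. The only soft spot is your handling of the degenerate case ${\bf F}={\bf 0}$: coincident camera centers force ${\bf x}'_m$ to be the image of ${\bf x}_m$ under a planar homography ${\bf H}_0$, not ${\bf x}'_m={\bf x}_m$, so the rank bound does not follow from literal column repetition. It still holds, however, because $\bar{\bf x}'^T_m\,{\bf S}\,{\bf H}_0\,\bar{\bf x}_m=0$ for every skew-symmetric ${\bf S}$, which supplies a three-dimensional space of bilinear relations and drives the rank even lower; with that repair your argument is complete.
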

\begin{proof}
Let $\{{\bf X}_m = [x_m,y_m,z_m]^T\}^M_{m=1}$ and
\begin{equation}\label{eq15}
{\bf P} = \begin{bmatrix}
p_1 & p_2 & p_3 & p_4\\
p_5 & p_6 & p_7 & p_8\\
p_9 & p_{10} & p_{11} & p_{12}
\end{bmatrix}_{3\times 4}~,\qquad~{\bf Q} = \begin{bmatrix}
q_1 & q_2 & q_3 & q_4\\
q_5 & q_6 & q_7 & q_8\\
q_9 & q_{10} & q_{11} & q_{12}
\end{bmatrix}_{3\times 4}
\end{equation}
be the $M$ world points and two projection matrices corresponding to the two-correspondence $\{{\bf x}_m\leftrightarrow{\bf x}'_m\}$. The equation Eq.~\eqref{eq11} which is from Eq.~\eqref{eq04}, implies
\begin{equation}\label{eq16}
{\bf K}\begin{bmatrix}x_m\\y_m\\z_m\\1\end{bmatrix} = {\bf 0}_{4\times 1}
\end{equation}
where
\begin{equation}\label{eq17}
{\bf K} = \begin{bmatrix}
p_1 - u_{m}p_9 & p_2 - u_{m}p_{10} & p_3 - u_{m}p_{11} & p_4 - u_{m}p_{12}\\[1mm]
p_5 - v_{m}p_9 & p_6 - v_{m}p_{10} & p_7 - v_{m}p_{11} & p_8 - v_{m}p_{12}\\[1mm]
q_1 - u'_{m}q_9 & q_2 - u'_{m}q_{10} & q_3 - u'_{m}q_{11} & q_4 - u'_{m}p_{12}\\[1mm]
q_5 - v'_{m}q_9 & q_6 - v'_{m}q_{10} & q_7 - v'_{m}q_{11} & q_8 - v'_{m}p_{12}
\end{bmatrix}_{4\times 4\, .}
\end{equation}
It is clear from Eq.~\eqref{eq17} that the determinant of ${\bf K}$ should be zero, i.e., $\text{det}({\bf K}) = 0$. Considering $u_{m}, v_{m}, u'_{m}, v'_{m}$ as unknown variables and $\{p_i, q_j\}$ as coefficients, we present the determinant of ${\bf K}$ as a polynomial equation that
\begin{equation}\label{eq18}
\begin{split}
\text{det}({\bf K}) &= a_0 + a_1u_{m} + a_2v_{m} + a_3u'_{m} + a_4v'_{m}\\
&~~~+a_5u_{m}v_{m} + a_6u_{m}u'_{m} + a_7u_{m}v'_{m} + a_8v_mu'_m + a_9v_mv'_m + a_{10}u'_{m}v'_{m}\\
&~~~+a_{11}u_{m}v_{m}u'_{m} + a_{12}u_mv_mv'_m + a_{13}u_mu'_mv'_m + a_{14}v_{m}u'_{m}v'_{m}\\
&~~~+a_{15}u_{m}v_{m}u'_{m}v'_{m}\, .
\end{split}
\end{equation}
This polynomial has sixteen monomials and they are
\begin{equation}\label{eq19}
\begin{split}
&1,~ u_{m},~ v_{m},~ u'_{m},~ v'_{m},~u_{m}v_{m},~ u_{m}u'_{m},~ u_{m}v'_{m},~v_{m}u'_{m},~v_{m}v'_{m},~ u'_{m}v'_{m},\\
&u_{m}v_{m}u'_{m},~ u_{m}v_{m}v'_{m},~u_{m}u'_{m}v'_{m},~v_{m}u'_{m}v'_{m},~u_{m}v_{m}u'_{m}v'_{m}\, ,
\end{split}
\end{equation}
corresponding with sixteen coefficients $a_0, a_1,\ldots, a_{15}$. Interestingly, we always have
\begin{equation}\label{eq20}
a_5 = a_{10} = a_{11} = a_{12} = a_{13} = a_{14} = a_{15} = 0\, .
\end{equation}
For example, the coefficient $a_5$ is given by
\begin{equation}\label{eq21}
\begin{split}
&p_{9}p_{10}q_{3}q_{8} + p_{9}p_{11}q_{4}q_{6} + p_{9}p_{12}q_{2}q_{7}+ p_{10}p_{12}q_{3}q_{5}
+ p_{10}p_{11}q_{1}q_{8} + p_{10}p_{9}q_{4}q_{7}\\
+\,&p_{11}p_{9}q_{2}q_{8} + p_{11}p_{10}q_{4}q_{5} + p_{11}p_{12}q_{1}q_{6} + p_{12}p_{11}q_{2}q_{5} + p_{12}p_{10}q_{1}q_{7} + p_{12}p_{9}q_{3}q_{6}\\
-\,&p_{9}p_{12}q_{3}q_{6} - p_{9}p_{11}q_{2}q_{8} - p_{9}p_{10}q_{4}q_{7} - p_{10}p_{9}q_{3}q_{8} - p_{10}p_{11}q_{4}q_{5} - p_{10}p_{12}q_{1}q_{7}\\
-\,&p_{11}p_{12}q_{2}q_{5} - p_{11}p_{10}q_{1}q_{8} - p_{11}p_{9}q_{4}q_{6} - p_{12}p_{9}q_{2}q_{7} - p_{12}p_{10}q_{3}q_{5} - p_{12}p_{11}q_{1}q_{6}\\
=~&0\, .\\[-3mm]
\end{split}
\end{equation}
Thus, the polynomial in Eq.~\eqref{eq18} with 16 monomials will be reduced to 9 monomials, and it is
\begin{equation}\label{eq22}
\begin{bmatrix}
1\\u_{m}\\v_{m}\\u'_{m}\\v'_{m}\\u_{m}u'_{m}\\u_{m}v'_{m}\\v_{m}u'_{m}\\v_{m}v'_{m}
\end{bmatrix}^T\begin{bmatrix}
a_0\\a_1\\a_2\\a_3\\a_4\\a_6\\a_7\\a_8\\a_9
\end{bmatrix} ~=~ 0\, .
\end{equation}
Eq.~\eqref{eq22} is correct for all $m = 1,2,\ldots,M$. Therefore,
\begin{equation}\label{eq23}
\begin{bmatrix}
1 & 1 & 1 & \ldots & 1\\
u_1 & u_2 & u_3 & \ldots & u_M\\
v_1 & v_2 & v_3 & \ldots & v_M\\
u'_1 & u'_2 & u'_3 & \ldots & u'_M\\
v'_1 & v'_2 & v'_3 & \ldots & v'_M\\
u_1u'_1 & u_2u'_2 & u_3u'_3 & \ldots & u_Mu'_M\\
u_1v'_1 & u_2v'_2 & u_3v'_3 & \ldots & u_Mv'_M\\
v_1u'_1 & v_2u'_2 & v_3u'_3 & \ldots & v_Mu'_M\\
v_1v'_1 & v_2v'_2 & v_3v'_3 & \ldots & v_Mv'_M
\end{bmatrix}^T\begin{bmatrix}
a_0\\a_1\\a_2\\a_3\\a_4\\a_6\\a_7\\a_8\\a_9
\end{bmatrix} ~=~ 0\, .
\end{equation}
Since the vector $[a_0, a_1, \ldots, a_9]^T$ is non-zero, the proof of Theorem~\ref{Thm:RanktwoView} is finished by Eq.~\eqref{eq23}.
\end{proof}

Theorem~\ref{Thm:RanktwoView} confirms that nine columns of the matrix ${\bf \Gamma}$ in Eq.~\eqref{eq14} are linearly dependent. When the number of key-points is at least nine ($M\geq 9$), the columns linear independence yields the rank deficient property for ${\bf \Gamma}$. The key point of the proof of Theorem~\ref{Thm:RanktwoView} is the linear relationship Eq.~\eqref{eq22} from that all the coordinates of the $m$th point correspondences are in the first vector and all the elements of the two projection matrices are in the second vector. Thus this linear relationship is still correct when we replace the first vector by another from other correspondences. In another word, Theorem~\ref{Thm:RanktwoView} is correct not only for nine two-correspondences but also ten, eleven or more two-correspondences.

When $M \geq 9$, considering the $9\times 9$ sub-matrix ${\bf \Gamma}_{1,2,\ldots,9}$ constructed by the first nine rows of ${\bf \Gamma}$ in Eq.~\eqref{eq14}, Theorem~\ref{Thm:RanktwoView} guarantees
\begin{equation}\label{eq24}
\text{det}\big({\bf \Gamma}_{1,2,\ldots,9}\big) = 0\, .
\end{equation}
Let us denote ${\bf \Gamma}_{1,2,\ldots,9}^{i}$ the $8\times 8$ matrix from ${\bf \Gamma}_{1,2,\ldots,9}$ when we remove the first row and the $i$-th column $(i = 1,2,\ldots, 9)$. Using the first row to compute the determinant of the $9\times 9$ matrix ${\bf \Gamma}_{1,2,\ldots,9}$ via its $8\times 8$ matrices, Eq.~\eqref{eq24} implies
\begin{equation}\label{eq25}
\alpha_{1|2,\ldots,9}u_{1} + \beta_{1|2,\ldots,9}v_{1} + \gamma_{1|2,\ldots,9} = 0\, ,
\end{equation}
where
\begin{equation}\label{eq26}
\begin{split}
    \alpha_{1|2,\ldots,9} &~\defeq~ -\big|{\bf \Gamma}_{1,2,\ldots,9}^2\big| - u'_{1}\big|{\bf \Gamma}_{1,2,\ldots,9}^6\big| + v'_{1}\big|{\bf \Gamma}_{1,2,\ldots,9}^7\big|\\[2mm]
    \beta_{1|2,\ldots,9} &~\defeq~ \big|{\bf \Gamma}_{1,2,\ldots,9}^3\big| - u'_{1}\big|{\bf \Gamma}_{1,2,\ldots,9}^8\big| + v'_{1}\big|{\bf \Gamma}_{1,2,\ldots,9}^9\big|\\[2mm]
    \gamma_{1|2,\ldots,9} &~\defeq~ \big|{\bf \Gamma}_{1,2,\ldots,9}^1\big| - u'_1\big|{\bf \Gamma}_{1,2,\ldots,9}^4\big| + v'_1\big|{\bf \Gamma}_{1,2,\ldots,9}^5\big|\, .
\end{split}
\end{equation}

The linear equation Eq.~\eqref{eq25} brings us an idea that if the image point ${\bf x}_1 = [u_1,v_1]^T$ is an outlier and other image points are good estimations then the outlier $[u_1,v_1]^T$ can be refined or recomputed based on Eq.~\eqref{eq25}. Only one linear equation as Eq.~\eqref{eq25} is not sufficient to determine two variables $u_1$ and $v_1$. Fortunately, when $M\geq 10$, we have at least these nine linear equations and they are sufficient to give us the good solutions for the outlier $[u_1,v_1]^T$. Formally, this idea is solved by the following proposition.
\begin{prop}\label{Prop:SelftwoView}
Assuming that $M\geq 10$, the functions of $u_1$ and $v_1$ from the remain coordinates are given by
\begin{equation}\label{eq27}
\begin{bmatrix}u_1\\v_1\end{bmatrix} = -\begin{bmatrix}\alpha_{1|2,\ldots,9} & \beta_{1|2,\ldots,9}\\
\vdots & \vdots\\\alpha_{1|i_1,\ldots,i_8} & \beta_{1|i_1,\ldots,i_8}\\\vdots & \vdots\\\alpha_{1|M-7,\ldots,M} & \beta_{1|M-7,\ldots,M}\end{bmatrix}^{\dagger}\begin{bmatrix}\gamma_{1|2,\ldots,9}\\\vdots\\\gamma_{1|i_1,\ldots,i_8}\\\vdots\\\gamma_{1|M-7,\ldots,M}\end{bmatrix}
\end{equation}
for all $2\leq i_1 < i_2 < \cdots < i_8 \leq M$, where $\dagger$ denotes Moore-Penrose inverse operator.
\end{prop}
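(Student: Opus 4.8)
The plan is to reuse the determinant-expansion identity Eq.~\eqref{eq25} but to apply it to \emph{every} way of choosing eight helper rows, not just rows $2,\ldots,9$. First I would fix the first correspondence as the unknown and, for each index block $2\leq i_1<\cdots<i_8\leq M$, form the $9\times 9$ submatrix ${\bf \Gamma}_{1,i_1,\ldots,i_8}$ built from row $1$ of ${\bf \Gamma}$ in Eq.~\eqref{eq14} together with rows $i_1,\ldots,i_8$. Since $M\geq 10\geq 9$, Theorem~\ref{Thm:RanktwoView} forces $\mathrm{rank}({\bf \Gamma})\leq 8$, so every such $9\times 9$ submatrix is singular and $\det({\bf \Gamma}_{1,i_1,\ldots,i_8})=0$, exactly as in Eq.~\eqref{eq24}.

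Next I would expand each of these determinants along the first row, which is the only row containing $u_1$ and $v_1$. Its entries are $1,u_1,v_1,u'_1,v'_1,u_1u'_1,u_1v'_1,v_1u'_1,v_1v'_1$, while the $8\times 8$ minors ${\bf \Gamma}^{\,i}_{1,i_1,\ldots,i_8}$ are assembled only from rows $i_1,\ldots,i_8$ and therefore contain neither $u_1$ nor $v_1$. Collecting the cofactor expansion by $u_1$ and $v_1$ reproduces Eq.~\eqref{eq25} verbatim with the block $2,\ldots,9$ replaced by $i_1,\ldots,i_8$, the coefficients $\alpha_{1|i_1,\ldots,i_8},\beta_{1|i_1,\ldots,i_8},\gamma_{1|i_1,\ldots,i_8}$ being the obvious analogues of Eq.~\eqref{eq26}. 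The decisive structural fact, which I would state explicitly, is that these three coefficients are affine in $(u'_1,v'_1)$ and built purely from the data of the remaining correspondences, so each identity
\[
\alpha_{1|i_1,\ldots,i_8}\,u_1+\beta_{1|i_1,\ldots,i_8}\,v_1+\gamma_{1|i_1,\ldots,i_8}=0
\]
is a genuine linear equation in the two unknowns $(u_1,v_1)$.

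Stacking these equations over all $\binom{M-1}{8}$ choices gives the linear system $A\,[u_1,v_1]^T=-{\bf c}$, where $A$ is the tall coefficient matrix displayed in Eq.~\eqref{eq27} and ${\bf c}$ is the column of $\gamma$'s. Because the true $(u_1,v_1)$ satisfies every one of these equations, the system is consistent, and applying the Moore--Penrose inverse yields $[u_1,v_1]^T=-A^{\dagger}{\bf c}$, which is precisely Eq.~\eqref{eq27}.

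The one point that must be checked, and which I expect to be the main obstacle, is that $A^{\dagger}$ returns the \emph{exact} solution rather than merely a minimum-norm least-squares surrogate. This holds precisely when $A$ has full column rank $2$: in that case $A^{\dagger}A=I_2$, so for a consistent system $A^{\dagger}(A[u_1,v_1]^T)=[u_1,v_1]^T$, whereas if $\mathrm{rank}(A)<2$ the product $A^{\dagger}A$ is only the orthogonal projector onto the row space of $A$ and the formula recovers a projection of $(u_1,v_1)$ instead. I would therefore argue that for correspondences in general position at least two of the index blocks produce linearly independent direction vectors $(\alpha_{1|\cdot},\beta_{1|\cdot})$, so that $\mathrm{rank}(A)=2$; the degenerate configurations in which all these equations are parallel are the only cases where Eq.~\eqref{eq27} could fail, and I would exclude them as non-generic.
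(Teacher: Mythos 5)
Your proposal is correct and follows essentially the same route as the paper: apply the vanishing of every $9\times 9$ minor $\det({\bf \Gamma}_{1,i_1,\ldots,i_8})=0$ guaranteed by Theorem~\ref{Thm:RanktwoView}, expand along the first row to get the linear equations $\alpha u_1+\beta v_1+\gamma=0$, stack all $\binom{M-1}{8}$ of them, and invert with the Moore--Penrose pseudoinverse. The one place you go beyond the paper is the full-column-rank issue: the paper simply asserts the stacked $\binom{M-1}{8}\times 2$ matrix ``is always full rank,'' whereas you correctly identify this as the step that actually needs justification and restrict the claim to generic configurations, which is the more honest treatment.
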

\begin{proof}
Eq.~\eqref{eq25} is correct for all nine indices $1$ and $2\leq i_1 < i_2 < \cdots < i_8 \leq M$. So,
\begin{equation}\label{eq28}
\begin{bmatrix}\alpha_{1|i_1,i_2,\ldots,i_8} & \beta_{1|i_1,i_2,\ldots,i_8}\end{bmatrix}\begin{bmatrix}u_1\\v_1\end{bmatrix} = -\gamma_{1|i_1,i_2,\ldots,i_8}\, .
\end{equation}
Hence, let eight indices $(i_1,\ldots,i_8)$ run from the first group $(2,\ldots,9)$ to the last group $(M-7,\ldots,M)$ we get
\begin{equation}\label{eq29}
\begin{bmatrix}\alpha_{1|2,\ldots,9} & \beta_{1|2,\ldots,9}\\\vdots & \vdots\\\alpha_{1|i_1,i_2,\ldots,i_8} & \beta_{1|i_1,i_2,\ldots,i_8}\\\vdots & \vdots\\\alpha_{1|M-7,\ldots,M} & \beta_{1|M-7,\ldots,M}\end{bmatrix}\begin{bmatrix}u_1\\v_1\end{bmatrix} = -\begin{bmatrix}\gamma_{1|2,\ldots,9}\\\vdots\\\gamma_{1|i_1,i_2,\ldots,i_8}\\\vdots\\\gamma_{1|M-7,\ldots,M}\end{bmatrix}\, .
\end{equation}
Note that the matrix in the left side of Eq.~\eqref{eq29} has the size ${M-1 \choose 8}\times 2$ and its is always full rank. Thus, Eq.~\eqref{eq27} is as a consequence of Eq.~\eqref{eq29} and Proposition~\ref{Prop:SelftwoView} is correct.
\end{proof}

\subsection{Rank Property for the Multi-View}
To get another rank property for the multi-view case we follow the idea from the linear constraint Eq.~\eqref{eq22} to derive another linear constraint that the first vector is from the coordinates of the correspondences and the second vector is from the coordinates of the world points. Thus, because the second vector is  independent to the images, if this linear constraint is found it will be correct for all images. Then an upper bound rank for the multi-view can be obtained.

Theorem~\ref{Thm:RankfiveView}, basing on the above discussion, proposes an upper bound rank for an $N\times 5$ matrix constructed from the six point correspondences in $N$ images.
\begin{theo}\label{Thm:RankfiveView}
Let
\begin{equation}\label{eq30}
\Big\{{\bf x}_m^{\texttt{\tiny(1)}} \,\leftrightarrow\, {\bf x}_m^{\texttt{\tiny(2)}} \,\leftrightarrow\, {\bf x}_m^{\texttt{\tiny(3)}} \,\leftrightarrow\, \cdots \,\leftrightarrow\, {\bf x}_m^{\texttt{\tiny(N)}}\Big\}^6_{m=1}
\end{equation}
be six correspondences on $N$ images, where ${\bf x}^{\texttt{\tiny(n)}}_m\defeq[u_{mn},v_{mn}]^T$. The following $N\times 5$ matrix
\begin{equation}\label{eq31}
{\bf \Lambda} ~\defeq~\begin{bmatrix}
\xi^{\text{\tiny(1)}}_{34,5}\xi^{\text{\tiny(1)}}_{12,6} & \xi^{\text{\tiny(1)}}_{42,5}\xi^{\text{\tiny(1)}}_{13,6} & \xi^{\text{\tiny(1)}}_{23,5}\xi^{\text{\tiny(1)}}_{14,6} &
\xi^{\text{\tiny(1)}}_{12,5}\xi^{\text{\tiny(1)}}_{34,6} & \xi^{\text{\tiny(1)}}_{13,5}\xi^{\text{\tiny(1)}}_{42,6}\\[3mm]
\xi^{\text{\tiny(2)}}_{34,5}\xi^{\text{\tiny(2)}}_{12,6} & \xi^{\text{\tiny(2)}}_{42,5}\xi^{\text{\tiny(2)}}_{13,6} & \xi^{\text{\tiny(2)}}_{23,5}\xi^{\text{\tiny(2)}}_{14,6} &
\xi^{\text{\tiny(2)}}_{12,5}\xi^{\text{\tiny(2)}}_{34,6} & \xi^{\text{\tiny(2)}}_{13,5}\xi^{\text{\tiny(2)}}_{42,6}\\[3mm]
\xi^{\text{\tiny(3)}}_{34,5}\xi^{\text{\tiny(3)}}_{12,6} & \xi^{\text{\tiny(3)}}_{42,5}\xi^{\text{\tiny(3)}}_{13,6} & \xi^{\text{\tiny(3)}}_{23,5}\xi^{\text{\tiny(3)}}_{14,6} &
\xi^{\text{\tiny(3)}}_{12,5}\xi^{\text{\tiny(3)}}_{34,6} & \xi^{\text{\tiny(3)}}_{13,5}\xi^{\text{\tiny(3)}}_{42,6}\\[1mm]
\vdots & \vdots & \vdots & \vdots & \vdots\\[1.5mm]
\xi^{\text{\tiny(N)}}_{34,5}\xi^{\text{\tiny(N)}}_{12,6} & \xi^{\text{\tiny(N)}}_{42,5}\xi^{\text{\tiny(N)}}_{13,6} & \xi^{\text{\tiny(N)}}_{23,5}\xi^{\text{\tiny(N)}}_{14,6} &
\xi^{\text{\tiny(N)}}_{12,5}\xi^{\text{\tiny(N)}}_{34,6} & \xi^{\text{\tiny(N)}}_{13,5}\xi^{\text{\tiny(N)}}_{42,6}
\end{bmatrix}_{N\times 5}
\end{equation}
where
\begin{equation}\label{eq32}
\xi^{\texttt{\tiny(n)}}_{i_1i_2,j} \,\defeq\, \begin{vmatrix}u_{i_1n}-u_{jn} & u_{i_2n}-u_{jn}\\[1mm]v_{i_1n}-v_{jn} & v_{i_2n}-v_{jn}\end{vmatrix}
\end{equation}
will have the rank at most 4.
\end{theo}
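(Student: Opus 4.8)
I propose to follow the same template as the proof of Theorem~\ref{Thm:RanktwoView}: produce, for each image $n$, a single linear relation among the five entries of the $n$-th row of ${\bf \Lambda}$ whose coefficients depend only on the six world points and \emph{not} on $n$, then stack the relation over $n=1,\dots,N$. This exhibits a fixed nonzero vector in the kernel of ${\bf \Lambda}$, exactly as Eq.~\eqref{eq22}--Eq.~\eqref{eq23} does in the two-view case, and forces $\mathrm{rank}({\bf \Lambda})\le 4$. The first preparatory step is to rewrite the $2\times 2$ determinant of Eq.~\eqref{eq32} in homogeneous coordinates: with $\tilde{\bf x}^{\text{\tiny(n)}}_m\defeq[u_{mn},v_{mn},1]^T$ one has $\xi^{\text{\tiny(n)}}_{i_1i_2,j}=\det\big[\tilde{\bf x}^{\text{\tiny(n)}}_{j},\tilde{\bf x}^{\text{\tiny(n)}}_{i_1},\tilde{\bf x}^{\text{\tiny(n)}}_{i_2}\big]$, a $3\times 3$ determinant, so each entry of ${\bf \Lambda}$ is indeed quartic in the correspondence coordinates.

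Next I would express $\xi^{\text{\tiny(n)}}_{i_1i_2,j}$ through the world points and the $n$-th camera. Using the projection relation Eq.~\eqref{eq02}, namely $\omega_{mn}\tilde{\bf x}^{\text{\tiny(n)}}_m={\bf P}_n[{\bf X}_m^T,1]^T$, together with the Cauchy--Binet formula, the triangle determinant factors as $\omega_{jn}\omega_{i_1n}\omega_{i_2n}\,\xi^{\text{\tiny(n)}}_{i_1i_2,j}={\boldsymbol\pi}_{j,i_1,i_2}^{T}{\bf C}_n$, where ${\boldsymbol\pi}_{j,i_1,i_2}$ is the plane through ${\bf X}_j,{\bf X}_{i_1},{\bf X}_{i_2}$ (its components are the signed $3\times 3$ minors of the homogeneous coordinate matrix of the three world points) and ${\bf C}_n$ is the null vector of ${\bf P}_n$, i.e.\ the $n$-th camera centre. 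The key structural observation is that the two factors appearing in each column of the $n$-th row together use the six indices $1,\dots,6$ exactly once, so every entry of that row carries the \emph{same} scalar $1/(\omega_{1n}\cdots\omega_{6n})$; dividing this out of the row does not change the rank, and what remains in the $\ell$-th column is a product of two linear forms in ${\bf C}_n$, i.e.\ a quadratic form $Q_\ell({\bf C}_n)$ whose coefficients are determinants of the six world points alone.

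The heart of the argument is then the algebraic identity that the five quadratic forms are linearly dependent with world-point coefficients, $\sum_{\ell=1}^{5}c_\ell\,Q_\ell({\bf C})\equiv 0$ identically in ${\bf C}\in\mathbb{R}^4$. To establish it I would fix ${\bf X}_1,\dots,{\bf X}_4$ as a projective basis and expand ${\bf X}_5,{\bf X}_6,{\bf C}$ in it, with coordinate vectors ${\boldsymbol\alpha},{\boldsymbol\beta},{\boldsymbol\gamma}$. A direct cofactor computation collapses each column entry (up to a common $[{\bf X}_1{\bf X}_2{\bf X}_3{\bf X}_4]^2$) to a product $m_{kl}\,n_{pq}$ over complementary index pairs, where $m_{kl}$ is a $2\times 2$ minor of $[{\boldsymbol\alpha}\,|\,{\boldsymbol\gamma}]$ and $n_{pq}$ a $2\times 2$ minor of $[{\boldsymbol\beta}\,|\,{\boldsymbol\gamma}]$. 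Because both minor families \emph{share} the camera column ${\boldsymbol\gamma}$, the five products obey a Grassmann--Plücker-type syzygy, and the coefficients $c_\ell$ are the corresponding bracket expressions in ${\boldsymbol\alpha},{\boldsymbol\beta}$ (equivalently, in the world-point brackets involving points $5$ and $6$). Evaluating this identity at ${\bf C}={\bf C}_n$ for every $n$ yields ${\bf \Lambda}\,[c_1,\dots,c_5]^T={\bf 0}$, completing the proof in the spirit of Eq.~\eqref{eq23}.

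I expect the main obstacle to be precisely this last identity. In Theorem~\ref{Thm:RanktwoView} the analogous miracle was the \emph{individual} vanishing $a_5=a_{10}=\cdots=a_{15}=0$; here no single coefficient vanishes, and one instead needs a genuine five-term quadratic syzygy among minors sharing a common column. Verifying it reduces to checking that the $6\times 5$ coefficient matrix, whose rows are indexed by the monomials $\gamma_a\gamma_b$, has rank $4$ (hence a one-dimensional space of relations); I would confirm this either by invoking the three-term Plücker relation after quotienting out the shared column ${\boldsymbol\gamma}$, or by a direct finite linear-algebra check. A secondary technical point is general position: the factorisation divides by the depths $\omega_{mn}$ and by the basis bracket $[{\bf X}_1{\bf X}_2{\bf X}_3{\bf X}_4]$, so I would first argue for world points and cameras in general position and then extend to all configurations by continuity, since $\mathrm{rank}\le 4$ is a closed condition.
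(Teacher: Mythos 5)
Your proposal is correct, but it reaches the kernel vector by a genuinely different mechanism than the paper. The paper also normalizes ${\bf X}_1,\dots,{\bf X}_4$ to the standard projective basis, but then works directly with the $12\times 12$ linear system Eq.~\eqref{eq34} for the entries of ${\bf P}_n$: row reduction isolates the singular $4\times 4$ block $\widehat{\bf K}$ in the unknowns $p_{9n},\dots,p_{12n}$ (Eqs.~\eqref{eq36}--\eqref{eq39}), and $\det(\widehat{\bf K})=0$ is expanded as a polynomial in $x_5,\dots,z_6$ with eighteen monomials whose coefficients are collapsed, via the hand-verified cancellations Eq.~\eqref{eq47}--Eq.~\eqref{eq49}, to the five-term inner product Eq.~\eqref{eq50}; the second factor there is the explicit kernel vector. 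You instead factor each $\xi^{\text{\tiny(n)}}_{i_1i_2,j}$ via Cauchy--Binet into a plane--camera-centre incidence $\boldsymbol{\pi}_{j,i_1,i_2}^{T}{\bf C}_n$, note that each entry of the $n$th row becomes, after clearing the common depth factor $1/(\omega_{1n}\cdots\omega_{6n})$, a pair-of-planes quadric $Q_\ell$ evaluated at ${\bf C}_n$, and reduce the theorem to a five-term syzygy among the $Q_\ell$ with world-point coefficients. That syzygy does hold, and the cleanest justification --- worth making explicit in place of the ``quotient out $\boldsymbol{\gamma}$'' remark --- is that each $Q_\ell$ vanishes at all six world points (each point lies on one of its two planes), so all five quadrics lie in the linear system of quadrics through six points in general position, whose vector-space dimension is $10-6=4$; five elements of a four-dimensional space are automatically dependent. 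Your fallback of checking that the $6\times 5$ coefficient matrix over the monomials $\gamma_a\gamma_b$ ($a\neq b$) has rank at most $4$ amounts to the same computation, since vanishing at $\boldsymbol{\alpha}$ and $\boldsymbol{\beta}$ cuts the six-dimensional monomial space down by two. What the paper's route buys is the closed-form kernel vector Eq.~\eqref{eq50}; what yours buys is a conceptual explanation of why the massive cancellation happens at all, plus an honest treatment of degenerate configurations (division by depths and by the basis bracket) via the closedness of the rank condition, which the paper leaves implicit.
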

\begin{proof}
Let ${\bf X}_1 = [x_1,y_1,z_1]^T, {\bf X}_2 = [x_2,y_2,z_2]^T, \ldots, {\bf X}_6 = [x_6,y_6,z_6]^T$ be the six world points and
\begin{equation}\label{eq33}
{\bf P}_n = \begin{bmatrix}
p_{1n} & p_{2n} & p_{3n} & p_{4n}\\
p_{5n} & p_{6n} & p_{7n} & p_{8n}\\
p_{9n} & p_{10n} & p_{11n} & p_{12n}
\end{bmatrix}_{3\times 4}
\end{equation}
($n=1,2,\ldots, N$) be five projection matrices corresponding to the point correspondences given by Eq.~\eqref{eq30}. From Eq.~\eqref{eq09} and Eq.~\eqref{eq10} we know that within at least six world points the projection matrix ${\bf P}_n$ can be derived by a linear equation. Unfortunately, these six world points are unknown. But we can use this linear equation to present each element of ${\bf P}_n$ as a function of these six world points. We begin with the linear equation presenting the relationship between the six world points and the projection matrix via the corresponding image point.

\begin{equation}\label{eq34}
{\left[\begin{tabular}{cccccccccccc}
$x_1$ & $y_1$ & $z_1$ & 1 & 0 & 0 & 0 & 0 & $-u_{1n}x_1$ & $-u_{1n}y_1$ & $-u_{1n}z_1$ & $-u_{1n}$\\[1mm]
0 & 0 & 0 & 0 & $x_1$ & $y_1$ & $z_1$ & 1 & $-v_{1n}x_1$ & $-v_{1n}y_1$ & $-v_{1n}z_1$ & $-v_{1n}$\\[1mm]
$x_2$ & $y_2$ & $z_2$ & 1 & 0 & 0 & 0 & 0 & $-u_{2n}x_2$ & $-u_{2n}y_2$ & $-u_{2n}z_2$ & $-u_{2n}$\\[1mm]
0 & 0 & 0 & 0 & $x_2$ & $y_2$ & $z_2$ & 1 & $-v_{2n}x_2$ & $-v_{2n}y_2$ & $-v_{2n}z_2$ &$-v_{2n}$\\[1mm]
$x_3$ & $y_3$ & $z_3$ & 1 & 0 & 0 & 0 & 0 & $-u_{3n}x_3$ & $-u_{3n}y_3$ & $-u_{3n}z_3$ & $-u_{3n}$\\[1mm]
0 & 0 & 0 & 0 & $x_3$ & $y_3$ & $z_3$ & 1 & $-v_{3n}x_3$ & $-v_{3n}y_3$ & $-v_{3n}z_3$ & $-v_{3n}$\\[1mm]
$x_4$ & $y_4$ & $z_4$ & 1 & 0 & 0 & 0 & 0 & $-u_{4n}x_4$ & $-u_{4n}y_4$ & $-u_{4n}z_4$ & $-u_{4n}$\\[1mm]
0 & 0 & 0 & 0 & $x_4$ & $y_4$ & $z_4$ & 1 & $-v_{4n}x_4$ & $-v_{4n}y_4$ & $-v_{4n}z_4$ & $-v_{4n}$\\[1mm]
$x_5$ & $y_5$ & $z_5$ & 1 & 0 & 0 & 0 & 0 & $-u_{5n}x_5$ & $-u_{5n}y_5$ & $-u_{5n}z_5$ & $-u_{5n}$\\[1mm]
0 & 0 & 0 & 0 & $x_5$ & $y_5$ & $z_5$ & 1 & $-v_{5n}x_5$ & $-v_{5n}y_5$ & $-v_{5n}z_5$ & $-v_{5n}$\\[1mm]
$x_6$ & $y_6$ & $z_6$ & 1 & 0 & 0 & 0 & 0 & $-u_{6n}x_6$ & $-u_{6n}y_6$ & $-u_{6n}z_6$ & $-u_{6n}$\\[1mm]
0 & 0 & 0 & 0 & $x_6$ & $y_6$ & $z_6$ & 1 & $-v_{6n}x_6$ & $-v_{6n}y_6$ & $-v_{6n}z_6$ & $-v_{6n}$\\[1mm]
\end{tabular}\right]_{12\times 12}\begin{bmatrix}p_{1n}\\[1mm]p_{2n}\\[1mm]p_{3n}\\[1mm]p_{4n}\\[1mm]p_{5n}\\[1mm]p_{6n}\\[1mm]p_{7n}\\[1mm]p_{8n}\\[1mm]p_{9n}\\[1mm]p_{10n}\\[1mm]p_{11n}\\[1mm]p_{12n}\end{bmatrix} \,=~ {\bf 0}_{12\times 1\, .}}
\end{equation}
Thanks to the invariant property of the $N$-view, $M$-point camera calibration under any projective transformation, without loss of generality we can assume that
\setcounter{equation}{34}
\begin{equation}\label{eq35}
\begin{bmatrix}{\bf X}_1\hspace{-1mm} & \hspace{-1mm}{\bf X}_2\hspace{-1mm} & \hspace{-1mm}{\bf X}_3\hspace{-1mm} & \hspace{-1mm}{\bf X}_4\end{bmatrix} \hspace{-0.5mm}=\hspace{-0.5mm} \begin{bmatrix}x_1 & \hspace{-0.5mm}x_2\hspace{-0.5mm} & \hspace{-0.5mm}x_3\hspace{-0.5mm} & x_4\\y_1 & \hspace{-0.5mm}y_2\hspace{-0.5mm} & \hspace{-0.5mm}y_3\hspace{-0.5mm} & y_4\\z_1 & \hspace{-0.5mm}z_2\hspace{-0.5mm} & \hspace{-0.5mm}z_3\hspace{-0.5mm} & z_4\end{bmatrix} \hspace{-0.5mm}=\hspace{-0.5mm} \begin{bmatrix}0 & 0 & 0 & 1\\0 & 0 & 1 & 0\\0 & 1 & 0 & 0\end{bmatrix}.
\end{equation}
Hence, Eq.~\eqref{eq34} becomes
\begin{equation}\label{eq36}
{\left[\begin{tabular}{cccccccccccc}
0 & 0 & 0 & 1 & 0 & 0 & 0 & 0 & 0 & 0 & 0 & $-u_{1n}$\\[1mm]
0 & 0 & 0 & 0 & 0 & 0 & 0 & 1 & 0 & 0 & 0 & $-v_{1n}$\\[1mm]
0 & 0 & 1 & 1 & 0 & 0 & 0 & 0 & 0 & 0 & $-u_{2n}$ & $-u_{2n}$\\[1mm]
0 & 0 & 0 & 0 & 0 & 0 & 1 & 1 & 0 & 0 & $-v_{2n}$ &$-v_{2n}$\\[1mm]
0 & 1 & 0 & 1 & 0 & 0 & 0 & 0 & 0 & $-u_{3n}$ & 0 & $-u_{3n}$\\[1mm]
0 & 0 & 0 & 0 & 0 & 1 & 0 & 1 & 0 & $-v_{3n}$ & 0 & $-v_{3n}$\\[1mm]
1 & 0 & 0 & 1 & 0 & 0 & 0 & 0 & $-u_{4n}$ & 0 & 0 & $-u_{4n}$\\[1mm]
0 & 0 & 0 & 0 & 1 & 0 & 0 & 1 & $-v_{4n}$ & 0 & 0 & $-v_{4n}$\\[1mm]
$x_5$ & $y_5$ & $z_5$ & 1 & 0 & 0 & 0 & 0 & $-u_{5n}x_5$ & $-u_{5n}y_5$ & $-u_{5n}z_5$ & $-u_{5n}$\\[1mm]
0 & 0 & 0 & 0 & $x_5$ & $y_5$ & $z_5$ & 1 & $-v_{5n}x_5$ & $-v_{5n}y_5$ & $-v_{5n}z_5$ & $-v_{5n}$\\[1mm]
$x_6$ & $y_6$ & $z_6$ & 1 & 0 & 0 & 0 & 0 & $-u_{6n}x_6$ & $-u_{6n}y_6$ & $-u_{6n}z_6$ & $-u_{6n}$\\[1mm]
0 & 0 & 0 & 0 & $x_6$ & $y_6$ & $z_6$ & 1 & $-v_{6n}x_6$ & $-v_{6n}y_6$ & $-v_{6n}z_6$ & $-v_{6n}$
\end{tabular}\right]_{12\times 12}\begin{bmatrix}p_{1n}\\[1mm]p_{2n}\\[1mm]p_{3n}\\[1mm]p_{4n}\\[1mm]p_{5n}\\[1mm]p_{6n}\\[1mm]p_{7n}\\[1mm]p_{8n}\\[1mm]p_{9n}\\[1mm]p_{10n}\\[1mm]p_{11n}\\[1mm]p_{12n}\end{bmatrix} \,=~ {\bf 0}_{12\times 1\, .}}
\end{equation}
Eq.~\eqref{eq36} is a simple linear equation for us to present $p_{in},~(i=1,2,\ldots,12)$ as functions of the two remain unknown world points ${\bf X}_5 = [x_5,y_5,z_5]^T$ and ${\bf X}_6 = [x_6,y_6,z_6]^T$. We simplify Eq.~\eqref{eq36} by implementing some linear operators on the rows of the matrix given in Eq.~\eqref{eq36} as follows. First, we subtract the first row for the third, fifth, 7th, 9th and 11th rows, and subtract the second row for the forth, sixth, 8th, 10th and 12th rows. After the first step we have a new matrix. With this new matrix, we subtract $x_5$ times the 7th row, $y_5$ times the fifth row and $z_5$ times the third row from the 9th and 11th rows and $x_5$ times the 8th row, $y_5$ times the 6th row and $z_5$ times the forth row from the 10th and 12th rows. Finally we have another but equivalent linear equations given by
\begin{equation}\label{eq37}
{\left[\hspace{-1mm}\begin{tabular}{cccccccccccc}
0 & 0 & 0 & 1 & 0 & 0 & 0 & 0 & 0 & 0 & 0 & $-u_{1n}$\\
0 & 0 & 0 & 0 & 0 & 0 & 0 & 1 & 0 & 0 & 0 & $-v_{1n}$\\
0 & 0 & 1 & 0 & 0 & 0 & 0 & 0 & 0 & 0 & $-u_{2n}$ & $u_{1n}-u_{2n}$\\
0 & 0 & 0 & 0 & 0 & 0 & 1 & 0 & 0 & 0 & $-v_{2n}$ &$v_{1n}-v_{2n}$\\
0 & 1 & 0 & 0 & 0 & 0 & 0 & 0 & 0 & $-u_{3n}$ & 0 & $u_{1n}-u_{3n}$\\
0 & 0 & 0 & 0 & 0 & 1 & 0 & 0 & 0 & $-v_{3n}$ & 0 & $v_{1n}-v_{3n}$\\
1 & 0 & 0 & 0 & 0 & 0 & 0 & 0 & $-u_{4n}$ & 0 & 0 & $u_{1n}-u_{4n}$\\
0 & 0 & 0 & 0 & 1 & 0 & 0 & 0 & $-v_{4n}$ & 0 & 0 & $v_{1n}-v_{4n}$\\[2mm]
0 & 0 & 0 & 0 & 0 & 0 & 0 & 0 & \hspace{-1mm}$x_5(u_{4n}\hspace{-0.5mm}-\hspace{-0.5mm}u_{5n})$\hspace{-1mm} & \hspace{-1mm}$y_5(u_{3n}\hspace{-0.5mm}-\hspace{-0.5mm}u_{5n})$\hspace{-1mm} & \hspace{-1mm}$z_5(u_{2n}\hspace{-0.5mm}-\hspace{-0.5mm}u_{5n})$\hspace{-1mm} & \hspace{-1mm}$\begin{matrix}(u_{1n}\hspace{-0.5mm}-\hspace{-0.5mm}u_{5n})\hspace{-0.5mm}-\hspace{-0.5mm}x_5(u_{1n}\hspace{-0.5mm}-\hspace{-0.5mm}u_{4n})\\-y_5(u_{1n}\hspace{-0.5mm}-\hspace{-0.5mm}u_{3n})\\-z_5(u_{1n}\hspace{-0.5mm}-\hspace{-0.5mm}u_{2n})\end{matrix}$\hspace{-1mm}\\[6mm]
0 & 0 & 0 & 0 & 0 & 0 & 0 & 0 & \hspace{-1mm}$x_5(v_{4n}\hspace{-0.5mm}-\hspace{-0.5mm}v_{5n})$\hspace{-1mm} & \hspace{-1mm}$y_5(v_{3n}\hspace{-0.5mm}-\hspace{-0.5mm}v_{5n})$\hspace{-1mm} & \hspace{-1mm}$z_5(v_{2n}\hspace{-0.5mm}-\hspace{-0.5mm}v_{5n})$\hspace{-1mm} & \hspace{-1mm}$\begin{matrix}(v_{1n}\hspace{-0.5mm}-\hspace{-0.5mm}v_{5n})\hspace{-0.5mm}-\hspace{-0.5mm}x_5(v_{1n}\hspace{-0.5mm}-\hspace{-0.5mm}v_{4n})\\-y_5(v_{1n}\hspace{-0.5mm}-\hspace{-0.5mm}v_{3n})\\-z_5(v_{1n}\hspace{-0.5mm}-\hspace{-0.5mm}v_{2n})\end{matrix}$\hspace{-1mm}\\[6mm]
0 & 0 & 0 & 0 & 0 & 0 & 0 & 0 & \hspace{-1mm}$x_6(u_{4n}\hspace{-0.5mm}-\hspace{-0.5mm}u_{6n})$\hspace{-1mm} & \hspace{-1mm}$y_6(u_{3n}\hspace{-0.5mm}-\hspace{-0.5mm}u_{6n})$\hspace{-1mm} & \hspace{-1mm}$z_6(u_{2n}\hspace{-0.5mm}-\hspace{-0.5mm}u_{6n})$\hspace{-1mm} & \hspace{-1mm}$\begin{matrix}(u_{1n}\hspace{-0.5mm}-\hspace{-0.5mm}u_{6n})\hspace{-0.5mm}-\hspace{-0.5mm}x_6(u_{1n}\hspace{-0.5mm}-\hspace{-0.5mm}u_{4n})\\-y_6(u_{1n}\hspace{-0.5mm}-\hspace{-0.5mm}u_{3n})\\-z_6(u_{1n}\hspace{-0.5mm}-\hspace{-0.5mm}u_{2n})\end{matrix}$\hspace{-1mm}\\[6mm]
0 & 0 & 0 & 0 & 0 & 0 & 0 & 0 & \hspace{-1mm}$x_6(v_{4n}\hspace{-0.5mm}-\hspace{-0.5mm}v_{6n})$\hspace{-1mm} & \hspace{-1mm}$y_6(v_{3n}\hspace{-0.5mm}-\hspace{-0.5mm}v_{6n})$\hspace{-1mm} & \hspace{-1mm}$z_6(v_{2n}\hspace{-0.5mm}-\hspace{-0.5mm}v_{6n})$\hspace{-1mm} & \hspace{-1mm}$\begin{matrix}(v_{1n}\hspace{-0.5mm}-\hspace{-0.5mm}v_{6n})\hspace{-0.5mm}-\hspace{-0.5mm}x_6(v_{1n}\hspace{-0.5mm}-\hspace{-0.5mm}v_{4n})\\-y_6(v_{1n}\hspace{-0.5mm}-\hspace{-0.5mm}v_{3n})\\-z_6(v_{1n}\hspace{-0.5mm}-\hspace{-0.5mm}v_{2n})\end{matrix}$\hspace{-1mm}
\end{tabular}\hspace{-2mm}\right]\hspace{-1mm}\begin{bmatrix}p_{1n}\\[1mm]p_{2n}\\[1mm]p_{3n}\\[1mm]p_{4n}\\[1mm]p_{5n}\\[1mm]p_{6n}\\[1mm]p_{7n}\\[1mm]p_{8n}\\[1mm]p_{9n}\\[1mm]p_{10n}\\[1mm]p_{11n}\\[1mm]p_{12n}\end{bmatrix} = {\bf 0}_{12\times 1\, .}}
\end{equation}

From Eq.~\eqref{eq37}, we easily get some simple linear equations of $\{p_{in}\}$ and $\{u_{mn}\}$ such as $p_{4n} = u_{1n}p_{12n},~p_{8n} = v_{1n}p_{12n}$ or $p_{3n} = u_{2n}p_{11n} - (u_{1n}-u_{2n})p_{12n}$. However the interesting point we exploit from Eq.~\eqref{eq37} to get the upper rank given by this theorem is in the following equation
\setcounter{equation}{37}
\begin{equation}\label{eq38}
\widehat{\bf K}\begin{bmatrix}p_{9n} \\ p_{10n} \\ p_{11n} \\ p_{12n}\end{bmatrix} = {\bf 0}_{1\times 4\, ,}
\end{equation}
where
\begin{equation}\label{eq39}
\widehat{\bf K} ~\defeq~
\begin{bmatrix}
x_5(u_{4n}-u_{5n}) & y_5(u_{3n}-u_{5n}) & z_5(u_{2n}-u_{5n}) & \begin{matrix}(u_{1n}-u_{5n})-x_5(u_{1n}-u_{4n})\\-y_5(u_{1n}-u_{3n})\\-z_5(u_{1n}-u_{2n})\end{matrix}\\[7mm]
x_5(v_{4n}-v_{5n}) & y_5(v_{3n}-v_{5n}) & z_5(v_{2n}-v_{5n}) & \begin{matrix}(v_{1n}-v_{5n})-x_5(v_{1n}-v_{4n})\\-y_5(v_{1n}-v_{3n})\\-z_5(v_{1n}-v_{2n})\end{matrix}\\[7mm]
x_6(u_{4n}-u_{6n}) & y_6(u_{3n}-u_{6n}) & z_6(u_{2n}-u_{6n}) & \begin{matrix}(u_{1n}-u_{6n})-x_6(u_{1n}-u_{4n})\\-y_6(u_{1n}\-u_{3n})\\-z_6(u_{1n}-u_{2n})\end{matrix}\\[7mm]
x_6(v_{4n}-v_{6n}) & y_6(v_{3n}-v_{6n}) & z_6(v_{2n}-v_{6n}) & \begin{matrix}(v_{1n}-v_{6n})-x_6(v_{1n}-v_{4n})\\-y_6(v_{1n}-v_{3n})\\-z_6(v_{1n}-v_{2n})\end{matrix}
\end{bmatrix}_{4\times 4\, .}
\end{equation}
It means that the matrix $\widehat{\bf K}$ is not full-rank and its determinant is zero, i.e.,
\begin{equation}\label{eq40}
\text{det}(\widehat{\bf K}) = 0\, .
\end{equation}
Noting that the variables in Eq.~\eqref{eq40} is $x_5,y_5,z_5$, the coordinates of ${\bf X}_5$ and $x_6,y_6,z_6$, the coordinates of ${\bf X}_6$. We do not have any variable in ${\bf P}_n$ in Eq.~\eqref{eq40}.
To compute the determinant of $\widehat{\bf K}$, we notice that
\begin{equation}\label{eq41}
\begin{split}
&\begin{vmatrix}x_5(u_{4n}-u_{5n})&y_5(u_{3n}-u_{5n})\\[2mm]
x_5(v_{4n}-v_{5n})&y_5(v_{3n}-v_{5n})\end{vmatrix}\begin{vmatrix}
z_6(u_{2n}-u_{6n})&\begin{matrix}(u_{1n}-u_{6n})-x_6(u_{1n}-u_{4n})\\-y_6(u_{1n}-u_{3n})\\-z_6(u_{1n}-u_{2n})\end{matrix}\\[9mm]
z_6(v_{2n}-v_{6n})&\begin{matrix}(v_{1n}-v_{6n})-x_6(v_{1n}-v_{4n})\\-y_6(v_{1n}-v_{3n})\\-z_6(v_{1n}-v_{2n})\end{matrix}
\end{vmatrix}\\[2mm]
&= x_5y_5z_6\xi^{\text{\tiny(n)}}_{34,5}\big[\xi^{\text{\tiny(n)}}_{12,6}-x_6\xi^{\text{\tiny(n)}}_{12,46}-y_6\xi^{\text{\tiny(n)}}_{12,36}-z_6\xi^{\text{\tiny(n)}}_{12,26}\big]
\end{split}
\end{equation}
where $\xi^{\text{\tiny(n)}}_{34,5}$ and $\xi^{\text{\tiny(n)}}_{12,6}$ are defined by Eq.~\eqref{eq32}, and others are given by
\begin{equation}\label{eq42}
\xi^{\text{\tiny(n)}}_{i_1i_2,j_1j_2}\,\defeq\,\begin{vmatrix}u_{i_1n}-u_{j_1n} & u_{i_2n}-u_{j_2n}\\[1mm]v_{i_1n}-v_{j_1n} & v_{i_2n}-v_{j_2n}\end{vmatrix}.
\end{equation}
Now the determinant of $\widehat{\bf K}$ is computed via its $2\times 2$ blocks as follows
\begin{equation}\label{eq43}
\begin{split}
\text{det}(\widehat{\bf K}) &= x_5y_5z_6\xi^{\text{\tiny(n)}}_{34,5}\big[\xi^{\text{\tiny(n)}}_{12,6}-x_6\xi^{\text{\tiny(n)}}_{12,46}-y_6\xi^{\text{\tiny(n)}}_{12,36}-z_6\xi^{\text{\tiny(n)}}_{12,26}\big]\\[0.5mm]
&~~~ - x_5y_6z_5\xi^{\text{\tiny(n)}}_{24,5}\big[\xi^{\text{\tiny(n)}}_{13,6}-x_6\xi^{\text{\tiny(n)}}_{13,46}-y_6\xi^{\text{\tiny(n)}}_{13,36}-z_6\xi^{\text{\tiny(n)}}_{13,26}\big]\\[0.5mm]
&~~~ + x_6y_5z_5\xi^{\text{\tiny(n)}}_{23,5}\big[\xi^{\text{\tiny(n)}}_{14,6}-x_6\xi^{\text{\tiny(n)}}_{14,46}-y_6\xi^{\text{\tiny(n)}}_{14,36}-z_6\xi^{\text{\tiny(n)}}_{14,26}\big]\\[0.5mm]
&~~~ + x_5y_6z_6\xi^{\text{\tiny(n)}}_{23,6}\big[\xi^{\text{\tiny(n)}}_{14,5}-x_5\xi^{\text{\tiny(n)}}_{14,45}-y_5\xi^{\text{\tiny(n)}}_{14,35}-z_5\xi^{\text{\tiny(n)}}_{14,25}\big]\\[0.5mm]
&~~~ - x_6y_5z_6\xi^{\text{\tiny(n)}}_{24,6}\big[\xi^{\text{\tiny(n)}}_{13,5}-x_5\xi^{\text{\tiny(n)}}_{13,45}-y_5\xi^{\text{\tiny(n)}}_{13,35}-z_5\xi^{\text{\tiny(n)}}_{13,25}\big]\\[0.5mm]
&~~~ + x_6y_6z_5\xi^{\text{\tiny(n)}}_{34,6}\big[\xi^{\text{\tiny(n)}}_{12,5}-x_5\xi^{\text{\tiny(n)}}_{12,45}-y_5\xi^{\text{\tiny(n)}}_{12,35}-z_5\xi^{\text{\tiny(n)}}_{12,25}\big]
\end{split}
\end{equation}
We present $\text{det}(\widehat{\bf K})$ as a polynomials in which its monomials are
\begin{equation}\label{eq44}
\begin{split}
&x_5y_5z_6,\, x_5y_6z_5,\, x_6y_5z_5,\, x_6y_6z_5,\, x_6y_5z_6,\, x_5y_6z_6,\\
&x_5^2y_6z_6,\, x_6y_5^2z_6,\, x_6y_6z_5^2,\, x_6^2y_5z_5,\, x_5y_6^2z_5,\, x_5y_5z_6^2\\
&x_5y_5x_6z_6,\, x_5y_5y_6z_6,\, x_5z_5x_6y_6,\, x_5z_5y_6z_6,\, y_5z_5x_6y_6,\, y_5z_5x_6z_6\, .
\end{split}
\end{equation}
This polynomial is
\begin{equation}\label{eq45}
\text{det}(\widehat{\bf K}) = b_0x_5y_5z_6 + b_1x_5y_6z_5 + \cdots + b_{17}y_5z_5x_6z_6\, ,
\end{equation}
where
\begin{equation}\label{eq46}
\begin{split}
b_0&= - b_{11} = \xi^{\text{\tiny(n)}}_{34,5}\xi^{\text{\tiny(n)}}_{12,6},~\hspace{3.75mm}b_1 \hspace{0.25mm}= - b_{10} =  \xi^{\text{\tiny(n)}}_{42,5}\xi^{\text{\tiny(n)}}_{13,6},~\hspace{3.75mm}b_2 = -b_9\hspace{1.5mm} = \xi^{\text{\tiny(n)}}_{23,5}\xi^{\text{\tiny(n)}}_{14,6}\, ,\\[1mm]
b_3&= - b_8 \hspace{1.25mm}= \xi^{\text{\tiny(n)}}_{34,6}\xi^{\text{\tiny(n)}}_{12,5},~\hspace{3.75mm}b_4= -b_7\hspace{1.75mm} = \xi^{\text{\tiny(n)}}_{42,6}\xi^{\text{\tiny(n)}}_{13,5},~~~~\,b_5 = -b_6 \hspace{1mm}= \xi^{\text{\tiny(n)}}_{23,6}\xi^{\text{\tiny(n)}}_{14,5}\, ,\\[1mm]
b_{12}&= \xi^{\text{\tiny(n)}}_{24,6}\xi^{\text{\tiny(n)}}_{13,45} - \xi^{\text{\tiny(n)}}_{34,5}\xi^{\text{\tiny(n)}}_{12,46},~b_{13}= \xi^{\text{\tiny(n)}}_{32,6}\xi^{\text{\tiny(n)}}_{14,35} - \xi^{\text{\tiny(n)}}_{34,5}\xi^{\text{\tiny(n)}}_{12,36},~b_{14}= \xi^{\text{\tiny(n)}}_{24,5}\xi^{\text{\tiny(n)}}_{13,46} - \xi^{\text{\tiny(n)}}_{34,6}\xi^{\text{\tiny(n)}}_{12,45}\, ,\\[1mm]
b_{15}&= \xi^{\text{\tiny(n)}}_{24,5}\xi^{\text{\tiny(n)}}_{13,26} - \xi^{\text{\tiny(n)}}_{23,6}\xi^{\text{\tiny(n)}}_{14,25},~b_{16}= \xi^{\text{\tiny(n)}}_{32,5}\xi^{\text{\tiny(n)}}_{14,36} - \xi^{\text{\tiny(n)}}_{34,6}\xi^{\text{\tiny(n)}}_{12,35},~ b_{17}= \xi^{\text{\tiny(n)}}_{24,6}\xi^{\text{\tiny(n)}}_{13,25} -\xi^{\text{\tiny(n)}}_{23,5}\xi^{\text{\tiny(n)}}_{14,26}\, .
\end{split}
\end{equation}
We use the following explanation
\begin{equation}\label{eq47}
\begin{split}
\xi^{\text{\tiny(n)}}_{i_1j_1,i_2j_2}-\xi^{\text{\tiny(n)}}_{i_1j_1,i_3j_2} &= (u_{i_1n}-u_{i_2n})(v_{j_1n}-v_{j_2n}) - (v_{i_1n}-v_{i_2n})(u_{j_1n}-u_{j_2n})\\
&~~ - (u_{i_1n}-u_{i_3n})(v_{j_1n}-v_{j_2n}) + (v_{i_1n}-v_{i_3n})(u_{j_1n}-u_{j_2n})\\
&= (u_{i_3n}-u_{i_2n})(v_{j_1n}-v_{j_2n}) - (v_{i_3n}-v_{i_2n})(u_{j_1n}-u_{j_2n})\\
&= \xi^{\text{\tiny(n)}}_{i_3j_1,i_2j_2}\, ,
\end{split}
\end{equation}
to obtain
\begin{equation}\label{eq48}
\begin{split}
b_4 + b_{12} &= \xi^{\text{\tiny(n)}}_{24,6}\big(\xi^{\text{\tiny(n)}}_{13,45}-\xi^{\text{\tiny(n)}}_{13,5}\big) - \xi^{\text{\tiny(n)}}_{34,5}\xi^{\text{\tiny(n)}}_{12,46} = \xi^{\text{\tiny(n)}}_{24,6}\xi^{\text{\tiny(n)}}_{34,5} - \xi^{\text{\tiny(n)}}_{34,5}\xi^{\text{\tiny(n)}}_{12,46} = -\xi^{\text{\tiny(n)}}_{34,5}\xi^{\text{\tiny(n)}}_{12,6}\\
&= -b_0\, ,
\end{split}
\end{equation}
and similar others
\begin{equation}\label{eq49}
\begin{split}
&b_{13} + b_5 + b_0 = 0,~~ b_{14} + b_3 + b_1 = 0,~~ b_{15} + b_5 + b_1 = 0,\\
&b_{16} + b_3 + b_2 = 0,~~ b_{17} + b_4 + b_2 = 0,~~ b_{16} + b_{15} + b_{12} = 0,~~ b_{17} + b_{14} + b_{13} = 0\, .\\[-3mm]
\end{split}
\end{equation}
Combining Eq.~\eqref{eq45}, Eq.~\eqref{eq46}, Eq.~\eqref{eq48} and Eq.~\eqref{eq49}, the determinant of $\widehat{\bf K}$ can be presented as a polynomial with five monomials as follows.
\begin{equation}\label{eq50}
\begin{split}
&\text{det}(\widehat{\bf K}) =
{\begin{bmatrix}b_0\\[1mm]b_1\\[1mm]b_2\\[1mm]b_3\\[1mm]b_4\end{bmatrix}^T\hspace{-1mm}\begin{bmatrix}
x_5z_6\big[y_5(1 - x_6 - y_6 - z_6) - y_6(1 - x_5 - y_5 - z_5)\big]\\[1mm]
x_5y_6\big[z_5(1 - x_6 - y_6 - z_6) - z_6(1 - x_5 - y_5 - z_5)\big]\\[1mm]
x_6y_5z_5(1 - x_6 - y_6 - z_6) - x_5y_6z_6(1 - x_5 - y_5 - z_5)\\[1mm]
(x_6z_5-x_5z_6)y_6(1 - x_5 - y_5 - z_5)\\[1mm]
(x_6y_5-x_5y_6)z_6(1 - x_5 - y_5 - z_5)
\end{bmatrix}}
\end{split}
\end{equation}
Since $\text{det}(\widehat{\bf K}) = 0$, the two column-vectors in Eq.~\eqref{eq50} are orthogonal. Note that $[b_0\,,\, b_1\,,\, b_2\,,\, b_3\,,\, b_4]$ is the $n$th row of the matrix ${\bf \Lambda}$ in Theorem 2. Hence all the rows of this matrix are orthogonal to the non-zero vector constructed by the coordinates of ${\bf X}_5$ and ${\bf X}_6$ given by Eq.~\eqref{eq50}. In another word, the $N\times 5$ matrix ${\bf \Lambda}$ has the rank at most 4.
\end{proof}

Similarly developing the rank property to the self-estimation as we do with Theorem~\ref{Thm:RanktwoView} and Proposition~\ref{Prop:SelftwoView} in the last sub-section, we derive Proposition~\ref{Prop:SelffiveView} as an application of the rank property proposed by Theorem~\ref{Thm:RankfiveView} on the self-estimation.

\begin{prop}\label{Prop:SelffiveView}
Assuming that $N\geq 6$, the functions of $u_{61}$ and $v_{61}$ from the remain coordinates are given by
\begin{equation}\label{eq51}
\begin{bmatrix}u_{61}\\v_{61}\end{bmatrix} = -\begin{bmatrix}\hat{\alpha}_{1|2,3,4,5} & \hat{\beta}_{1|2,3,4,5}\\
\vdots & \vdots\\\hat{\alpha}_{1|j_1j_2j_3j_4} & \hat{\beta}_{1|j_1j_2j_3j_4}\\\vdots & \vdots\\\hat{\alpha}_{1|N-3,\ldots,N} & \hat{\beta}_{1|N-3,\ldots,N}\end{bmatrix}^{\dagger}\begin{bmatrix}\hat{\gamma}_{1|2,3,4,5}\\\vdots\\\hat{\gamma}_{1|j_1j_2j_3j_4}\\\vdots\\\hat{\gamma}_{1|N-3,\ldots,N}\end{bmatrix}
\end{equation}
for all $2\leq j_1 < j_2 < j_3 < j_4 \leq N$, where
\begin{equation}\label{eq52}
\begin{split}
    \hat{\alpha}_{1|j_1j_2j_3j_4} &= \begin{vmatrix}\begin{matrix}\xi^{\text{\tiny(1)}}_{34,5}(v_{11}-v_{21})\\[1mm]\xi^{\text{\tiny(1)}}_{42,5}(v_{11}-v_{31})\\[1mm]\xi^{\text{\tiny(1)}}_{23,5}(v_{11}-v_{41})\\[1mm]\xi^{\text{\tiny(1)}}_{12,5}(v_{31}-v_{41})\\[1mm]\xi^{\text{\tiny(1)}}_{13,5}(v_{41}-v_{21})\end{matrix} ~&~ {\bf A}_{j_1j_2j_3j_4}\end{vmatrix}_{5\times 5}\quad,\\[3mm]
    \hat{\beta}_{1|j_1j_2j_3j_4} &= \begin{vmatrix}\begin{matrix}\xi^{\text{\tiny(1)}}_{34,5}(u_{21}-u_{11})\\[1mm]\xi^{\text{\tiny(1)}}_{42,5}(u_{31}-u_{11})\\[1mm]\xi^{\text{\tiny(1)}}_{23,5}(u_{41}-u_{11})\\[1mm]\xi^{\text{\tiny(1)}}_{12,5}(u_{41}-u_{31})\\[1mm]\xi^{\text{\tiny(1)}}_{13,5}(u_{21}-u_{41})\end{matrix} ~&~ {\bf A}_{j_1j_2j_3j_4}\end{vmatrix}_{5\times 5}\quad,\\[3mm]
    \hat{\gamma}_{1|j_1j_2j_3j_4} &= \begin{vmatrix}\begin{matrix}\xi^{\text{\tiny(1)}}_{34,5}(u_{11}v_{21}-u_{21}v_{11})\\[1mm]\xi^{\text{\tiny(1)}}_{42,5}(u_{11}v_{31}-u_{31}v_{11})\\[1mm]\xi^{\text{\tiny(1)}}_{23,5}(u_{11}v_{41}-u_{41}v_{11})\\[1mm]\xi^{\text{\tiny(1)}}_{12,5}(u_{31}v_{41}-u_{41}v_{31})\\[1mm]\xi^{\text{\tiny(1)}}_{13,5}(u_{41}v_{21}-u_{21}v_{41})\end{matrix} ~&~ {\bf A}_{j_1j_2j_3j_4}\end{vmatrix}_{5\times 5}\quad,
\end{split}
\end{equation}
and
\begin{equation}\label{eq53}
{\bf A}_{j_1j_2j_3j_4} \hspace{-0.5mm}=\hspace{-0.5mm} {\begin{bmatrix}
\xi^{\text{\tiny($j_1$)}}_{34,5}\xi^{\text{\tiny($j_1$)}}_{12,6} & \xi^{\text{\tiny($j_2$)}}_{34,5}\xi^{\text{\tiny($j_2$)}}_{12,6} & \xi^{\text{\tiny($j_3$)}}_{34,5}\xi^{\text{\tiny($j_3$)}}_{12,6} &
\xi^{\text{\tiny($j_4$)}}_{34,5}\xi^{\text{\tiny($j_4$)}}_{12,6}\\[3mm]
\xi^{\text{\tiny($j_1$)}}_{42,5}\xi^{\text{\tiny($j_1$)}}_{13,6} & \xi^{\text{\tiny($j_2$)}}_{42,5}\xi^{\text{\tiny($j_2$)}}_{13,6} & \xi^{\text{\tiny($j_3$)}}_{42,5}\xi^{\text{\tiny($j_3$)}}_{13,6} &
\xi^{\text{\tiny($j_4$)}}_{42,5}\xi^{\text{\tiny($j_4$)}}_{13,6}\\[3mm]
\xi^{\text{\tiny($j_1$)}}_{23,5}\xi^{\text{\tiny($j_1$)}}_{14,6} & \xi^{\text{\tiny($j_2$)}}_{23,5}\xi^{\text{\tiny($j_2$)}}_{14,6} & \xi^{\text{\tiny($j_3$)}}_{23,5}\xi^{\text{\tiny($j_3$)}}_{14,6} &
\xi^{\text{\tiny($j_4$)}}_{23,5}\xi^{\text{\tiny($j_4$)}}_{14,6}\\[3mm]
\xi^{\text{\tiny($j_1$)}}_{12,5}\xi^{\text{\tiny($j_1$)}}_{34,6} & \xi^{\text{\tiny($j_2$)}}_{12,5}\xi^{\text{\tiny($j_2$)}}_{34,6} & \xi^{\text{\tiny($j_3$)}}_{12,5}\xi^{\text{\tiny($j_3$)}}_{34,6} &
\xi^{\text{\tiny($j_4$)}}_{12,5}\xi^{\text{\tiny($j_4$)}}_{34,6}\\[2.5mm] \xi^{\text{\tiny($j_1$)}}_{13,5}\xi^{\text{\tiny($j_1$)}}_{42,6} &
\xi^{\text{\tiny($j_2$)}}_{13,5}\xi^{\text{\tiny($j_2$)}}_{42,6} &
\xi^{\text{\tiny($j_3$)}}_{13,5}\xi^{\text{\tiny($j_3$)}}_{42,6} &
\xi^{\text{\tiny($j_4$)}}_{13,5}\xi^{\text{\tiny($j_4$)}}_{42,6}
\end{bmatrix}_{5\times 4\, .}}
\end{equation}
\end{prop}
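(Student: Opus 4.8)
The plan is to follow the template of Theorem~\ref{Thm:RanktwoView} and Proposition~\ref{Prop:SelftwoView}: the rank bound from Theorem~\ref{Thm:RankfiveView} forces every $5\times 5$ minor of ${\bf \Lambda}$ to vanish, and each such vanishing minor, once we isolate the unknown coordinates $u_{61},v_{61}$, becomes a single scalar linear equation in those two unknowns. Since $N\geq 6$ supplies at least $\binom{N-1}{4}\geq 5$ such equations (one for each admissible index tuple $2\leq j_1<j_2<j_3<j_4\leq N$), stacking them and inverting by least squares yields Eq.~\eqref{eq51}.

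First I would locate where $u_{61}$ and $v_{61}$ enter the matrix ${\bf \Lambda}$ of Eq.~\eqref{eq31}. The coordinates of the sixth point in the first view appear only in the first row (the row for $n=1$), and within that row only through the factors $\xi^{\text{\tiny(1)}}_{ab,6}$, i.e.\ those determinants of Eq.~\eqref{eq32} whose base index is $6$. Expanding the definition gives the affine-linear form
\begin{equation*}
\xi^{\text{\tiny(1)}}_{ab,6} = (u_{a1}v_{b1}-u_{b1}v_{a1}) + u_{61}(v_{a1}-v_{b1}) - v_{61}(u_{a1}-u_{b1}),
\end{equation*}
so each entry of the first row of ${\bf \Lambda}$ splits as a constant part plus $u_{61}$ times a coefficient plus $v_{61}$ times a coefficient, the accompanying factor of the form $\xi^{\text{\tiny(1)}}_{ab,5}$ riding along unchanged. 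Carrying out this split for the five entries involving $\xi^{\text{\tiny(1)}}_{12,6},\xi^{\text{\tiny(1)}}_{13,6},\xi^{\text{\tiny(1)}}_{14,6},\xi^{\text{\tiny(1)}}_{34,6},\xi^{\text{\tiny(1)}}_{42,6}$ reproduces exactly the three length-five column vectors appearing on the left of the determinants $\hat{\alpha},\hat{\beta},\hat{\gamma}$ in Eq.~\eqref{eq52}.

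Next I would fix a tuple $(j_1,j_2,j_3,j_4)$ and consider the $5\times 5$ submatrix of ${\bf \Lambda}$ with rows indexed by views $1,j_1,j_2,j_3,j_4$; by Theorem~\ref{Thm:RankfiveView} its determinant is zero. Transposing so that the rows become columns, the first column is the first row of ${\bf \Lambda}$ and the remaining four columns form precisely the block ${\bf A}_{j_1j_2j_3j_4}$ of Eq.~\eqref{eq53}. Writing the first column as the constant vector plus $u_{61}$ times the first coefficient vector plus $v_{61}$ times the second coefficient vector, and invoking multilinearity of the determinant in that column, the vanishing minor becomes
\begin{equation*}
\hat{\alpha}_{1|j_1j_2j_3j_4}\,u_{61} + \hat{\beta}_{1|j_1j_2j_3j_4}\,v_{61} + \hat{\gamma}_{1|j_1j_2j_3j_4} = 0,
\end{equation*}
with $\hat{\alpha},\hat{\beta},\hat{\gamma}$ the $5\times 5$ determinants of Eq.~\eqref{eq52}. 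Letting the tuple range over all admissible choices stacks these into an overdetermined system whose coefficient matrix has rows $[\hat{\alpha}_{1|j_1j_2j_3j_4}\ \hat{\beta}_{1|j_1j_2j_3j_4}]$ and right-hand side $-\hat{\gamma}_{1|j_1j_2j_3j_4}$, whose Moore--Penrose solution is Eq.~\eqref{eq51}.

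The routine part is the sign-bookkeeping in the affine-linear expansion above and checking that it matches Eq.~\eqref{eq52} entry by entry. The main obstacle, exactly as in Proposition~\ref{Prop:SelftwoView}, is justifying that the stacked coefficient matrix $[\hat{\alpha}\ \hat{\beta}]$ has rank two, so that the pseudoinverse returns the genuine and unique solution rather than a spurious least-squares artifact; I would argue this holds generically for non-degenerate correspondences, mirroring the full-rank assertion made after Eq.~\eqref{eq29}.
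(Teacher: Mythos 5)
Your proposal matches the paper's proof: the paper likewise takes the $5\times5$ minor $\det({\bf \Lambda}_{1j_1j_2j_3j_4})=0$, expands along the row for view $1$, substitutes the affine-linear form $\xi^{\text{\tiny(1)}}_{i_1i_2,6}=(u_{i_11}v_{i_21}-u_{i_21}v_{i_11})+(v_{i_11}-v_{i_21})u_{61}+(u_{i_21}-u_{i_11})v_{61}$ to obtain $\hat{\alpha}u_{61}+\hat{\beta}v_{61}+\hat{\gamma}=0$, and stacks over all tuples before applying the pseudoinverse (your use of multilinearity in the transposed column is just the cofactor expansion in different clothing). The only difference is that the paper simply asserts the stacked ${N-1\choose 4}\times 2$ matrix is full rank, whereas you correctly flag this as the point needing a genericity argument.
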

\begin{proof}
For any five indices $1$ and $2 \leq j_1 < j_2 < j_3 < j_4 \leq N$, we use five rows of ${\bf \Lambda}$ in Eq.~\eqref{eq31} corresponding with these five indices to get the $5\times5$ sub-matrix ${\bf \Lambda}_{1j_1j_2j_3j_4}$. The rank property of ${\bf \Lambda}$ yields
\begin{equation}\label{eq54}
    \text{det}\big({\bf \Lambda}_{1j_1j_2j_3j_4}\big) = 0\, ,
\end{equation}
or
\begin{equation}\label{eq55}
\begin{bmatrix}
\xi^{\text{\tiny(1)}}_{34,5}\xi^{\text{\tiny(1)}}_{12,6}\\[3mm]
-\xi^{\text{\tiny(1)}}_{42,5}\xi^{\text{\tiny(1)}}_{13,6}\\[3mm]
\xi^{\text{\tiny(1)}}_{23,5}\xi^{\text{\tiny(1)}}_{14,6}\\[3mm]
-\xi^{\text{\tiny(1)}}_{12,5}\xi^{\text{\tiny(1)}}_{34,6}\\[3mm]
\xi^{\text{\tiny(1)}}_{13,5}\xi^{\text{\tiny(1)}}_{42,6}
\end{bmatrix}^T\begin{bmatrix}
\big|{\bf \Lambda}_{1j_1j_2j_3j_4}^1\big|\\[3mm]
\big|{\bf \Lambda}_{1j_1j_2j_3j_4}^2\big|\\[3mm]
\big|{\bf \Lambda}_{1j_1j_2j_3j_4}^3\big|\\[3mm]
\big|{\bf \Lambda}_{1j_1j_2j_3j_4}^4\big|\\[3mm]
\big|{\bf \Lambda}_{1j_1j_2j_3j_4}^5\big|
\end{bmatrix}
~= 0\, .
\end{equation}
Since
\begin{equation}\label{eq56}
\xi^{\text{\tiny(1)}}_{i_1i_2,6} = (u_{i_11}v_{i_21} - u_{i_21}v_{i_11}) + (v_{i_11}-v_{i_21})u_{61} + (u_{i_21}-u_{i_11})v_{61}
\end{equation}
Eq.~\eqref{eq55} means
\begin{equation}\label{eq57}
\hat{\alpha}_{1|j_1j_2j_3j_4}u_{61} + \hat{\beta}_{1|j_1j_2j_3j_4}v_{61} + \hat{\gamma}_{1|j_1j_2j_3j_4} = 0\, ,
\end{equation}
where $\hat{\alpha}_{1|j_1j_2j_3j_4}, \hat{\beta}_{1|j_1j_2j_3j_4}$ and $\hat{\gamma}_{1|j_1j_2j_3j_4}$ are given by Eq.~\eqref{eq52} and Eq.~\eqref{eq53}. Let four indices $(j_1,j_2,j_3,j_4)$ run from the first $(2,3,4,5)$ to the last $(N-3,N-2,N-1,N)$, we get a similar formula of Eq.~\eqref{eq29} that
\begin{equation}\label{eq58}
\begin{bmatrix}\hat{\alpha}_{1|2,3,4,5} & \hat{\beta}_{1|2,3,4,5}\\\vdots & \vdots\\\hat{\alpha}_{1|j_1j_2j_3j_4} & \hat{\beta}_{1|j_1j_2j_3j_4}\\\vdots & \vdots\\\hat{\alpha}_{1|N-3,\ldots,N} & \hat{\beta}_{1|N-3,\ldots,N}\end{bmatrix}\begin{bmatrix}u_{61}\\v_{61}\end{bmatrix} = -\begin{bmatrix}\hat{\gamma}_{1|2,3,4,5}\\\vdots\\\hat{\gamma}_{1|j_1j_2j_3j_4}\\\vdots\\\hat{\gamma}_{1|N-3,\ldots,N}\end{bmatrix}\, .
\end{equation}
When $N \geq 6$, the matrix in the left side of Eq.~\eqref{eq58} with the size ${N-1 \choose 4}\times 2$ is always full rank. So Eq.~\eqref{eq58} and Eq.~\eqref{eq51} are equivalent. Hence, Proposition~\ref{Prop:SelffiveView} is correct.
\end{proof}

\vspace{0.2in}

\section{Correspondences Refinement}\label{sec:refinement}

In this section, we study how to apply Theorem~\ref{Thm:RanktwoView}, Theorem~\ref{Thm:RankfiveView}, Proposition~\ref{Prop:SelftwoView} and Proposition~\ref{Prop:SelffiveView} on refining correspondences. More precisely, we study three applications that (i) the correspondences refinement based on Theorem~\ref{Thm:RanktwoView}, (ii) the outliers recognition based on Theorem~\ref{Thm:RanktwoView}, and (iii) the self correspondences estimation based on Proposition~\ref{Prop:SelftwoView} and Proposition~\ref{Prop:SelffiveView}. Combining these three applications, we finally propose an algorithm, named \emph{Main Algorithm}, to derive better correspondences from their observations.

\newpage

\subsection{Correspondences Refinement}

The $M\times 9$ matrix ${\bf \Gamma}$ given by Eq.~\eqref{eq14} in Theorem~\ref{Thm:RanktwoView} and the $N\times 5$ matrix ${\bf \Lambda}$ given by Eq.~\eqref{eq31} in Theorem~\ref{Thm:RankfiveView} for the perfect correspondences (the ground-truths) without noise are rank deficient. With a set of corrupted noisy correspondence observations, the two matrices will be full rank. The utilization of Theorem~\ref{Thm:RanktwoView} and Theorem~\ref{Thm:RankfiveView} for the application of \emph{correspondences refinement} is to force the rank deficient of ${\bf \Gamma}$ and ${\bf \Lambda}$. It can be achieved through \emph{singular value decomposition} (SVD) and the correspondences retrieved from the approximations will give the refined correspondences.

In the case for Theorem~\ref{Thm:RanktwoView}, for any correspondences from two images and at least nine key-points we first form the matrix ${\bf \Gamma}$ and apply the SVD. By keeping the largest eight singular values and setting the last to zero, an improved ${\bf \Gamma}$ can be obtained. The second, third, forth and fifth columns of this improvement will give the enhanced correspondences for the associated two images. Notice that the rank deficient property of ${\bf \Gamma}$ is still correct when these correspondences are transformed by any translation, rotation or reflection. Thus, we fix these ambiguities by finding an optimal translation, rotation and reflection to transform the enhanced correspondences close to the observed correspondences. We repeat the same processing for all two images and at least nine key-points, collect all enhanced correspondences, take their medians corresponding with each image point, and iterate a few times.
This algorithm is called the \emph{Correspondences refinement} and presented carefully by Algorithm~\ref{Alg:Refinement}.

For Theorem~\ref{Thm:RankfiveView}, we easily obtain the improved ${\bf \Lambda}$ by applying SVD, keeping the largest four singular values and setting the last to zero as we do with ${\bf \Gamma}$. However, it seems very difficult to extract the enhanced correspondences from the improvement of ${\bf \Lambda}$. In this work, we have not done with this problem, and it will be our the future work.

\begin{algorithm}[t]
\caption{Correspondences Refinement}
\label{Alg:Refinement}
{
{\bf Input:} $\big\{\hat{\bf x}^{\text{\tiny(1)}}_m \,\leftrightarrow\,\hat{\bf x}^{\text{\tiny(2)}}_m \,\leftrightarrow\, \cdots \,\leftrightarrow\,\hat{\bf x}^{\text{\tiny(N)}}_m\big\}^M_{m=1}$~:~ Correspondences.\\[-2mm]

{\bf Implementation:}\\[-2mm]

\textbf{\emph{Repeat several times the following steps:}}\\[-2mm]

{\bf 1.} $\mathcal{X}_{mn} \longleftarrow \emptyset$~:~ A set to store all candidates for refining $\hat{\bf x}_m^{\text{\tiny(n)}}$.\\[-2mm]

{\bf 2.} $\bar{\bf x}_m^{\text{\tiny(n)}} ~~\longleftarrow~ \hat{\bf x}_m^{\text{\tiny (n)}}$~ for all $n,m$.\\[-2mm]

{\bf 3.} For all $1 \leq n_1 < n_2 \leq N$\\[-2mm]

\hspace{5mm}{\bf 3.1.} ${\bf \Gamma} ~\overset{\eqref{eq13}\text{ in Theorem~\ref{Thm:RanktwoView}}}{\longleftarrow}~ \Bigg\{\begin{bmatrix}u_m\\[1mm]v_m\end{bmatrix} \defeq \bar{\bf x}_{m}^{\text{\tiny($n_1$)}}~\text{ and } ~ \begin{bmatrix}u'_m\\[1mm]v'_m\end{bmatrix} \defeq \bar{\bf x}_{m}^{\text{\tiny($n_2$)}}\Bigg\}^M_{m=1}$\\

\hspace{5mm}{\bf 3.2.} Apply SVD and use the largest 8 singular values only to construct $\widehat{\bf \Gamma}$ as a refinement of ${\bf \Gamma}$.\\[-2mm]

\hspace{5mm}{\bf 3.3.} Find the optimal candidates $\big\{\breve{\bf x}_{1}^{\text{\tiny($n_1$)}},\breve{\bf x}_{1}^{\text{\tiny($n_2$)}},\ldots,\breve{\bf x}_{M}^{\text{\tiny($n_1$)}},\breve{\bf x}_{M}^{\text{\tiny($n_2$)}}\big\}$~:\\[-1mm]

\hspace{10mm}{\bf a.} $\breve{\bf x}_{m}^{\text{\tiny($n_1$)}} \leftarrow \begin{bmatrix}\widehat{\bf \Gamma}_{m,2}\\[0.5mm]\widehat{\bf \Gamma}_{m,3}\end{bmatrix}$ ~and~ $\breve{\bf x}_{m}^{\text{\tiny($n_2$)}} \leftarrow \begin{bmatrix}\widehat{\bf \Gamma}_{m,4}\\[0.5mm]\widehat{\bf \Gamma}_{m,5}\end{bmatrix}$\, for~ $m = 1,\ldots,M,$ where $\widehat{\bf \Gamma}_{i,j}$ is the $(i,j)$th element of $\widehat{\bf \Gamma}$.\\[1mm]

\hspace{10mm}{\bf b.} To avoid the ambiguities of translations, rotations and reflections we use the optimal translation, rotation and reflection to change $\big\{\breve{\bf x}_{1}^{\text{\tiny($n_1$)}},\breve{\bf x}_{1}^{\text{\tiny($n_2$)}},\ldots,\breve{\bf x}_{M}^{\text{\tiny($n_1$)}},\breve{\bf x}_{M}^{\text{\tiny($n_2$)}}\big\}$ such that\\[-4mm]

\[\sum^2_{j=1}\sum^M_{m=1}\big\|\breve{\bf x}_{m}^{\text{\tiny($n_j$)}} - \hat{\bf x}_{m}^{\text{\tiny($n_j$)}}\big\|^2_2 \quad\longrightarrow\quad \text{minimum\, .}\]

\hspace{10mm}{\bf c.} Store $\breve{\bf x}_{m}^{\text{\tiny($n_1$)}}$ to $\mathcal{X}_{mn_1}$ ~and~ $\breve{\bf x}_{m}^{\text{\tiny($n_2$)}}$ to $\mathcal{X}_{mn_2}$\, .\\[-2mm]

{\bf 4.} For all $n,m$, obtain the refinement $\bar{\bf x}_m^{\text{\tiny(n)}}$ of $\hat{\bf x}_m^{\text{\tiny(n)}}$, by taking the \emph{median} of all elements in $\mathcal{X}_{mn}$.\\[-1mm]

{\bf Output:} $\big\{\bar{\bf x}^{\text{\tiny(1)}}_m \,\leftrightarrow\,\bar{\bf x}^{\text{\tiny(2)}}_m \,\leftrightarrow\,\cdots \,\leftrightarrow\,\bar{\bf x}^{\text{\tiny(N)}}_m\big\}^M_{m=1}$~: The refinements.
}
\end{algorithm}

\subsection{Outliers Recognition}

The Algorithm~\ref{Alg:Refinement}, \emph{Correspondences Refinement}, gives us the refined correspondences from the observed correspondences. When the noises are absent, i.e., the observations are the ground-truths, a difference between all the refinements and all the observations will be zero. We expect this difference as a parameter to evaluate how large the noises are in the observations. The small difference confirms the very good observed correspondences and vice versa, the large difference yields the bad ones. In addition, if the difference from all correspondences is small but a difference from one image point and its refinement is large, we will recognize this image point an outlier. Here we emphasize that the concept \emph{outlier} is applied for an image point, not for a correspondence. Thus, an outlier is an $2\times 1$ vector in which the first coordinate (an integer number between one and $N$) points to the image and the second one (an integer number between one and $M$) indicates the point.

Following the above explanation, we propose Algorithm~\ref{Alg:Recognition}, named \emph{Outliers recognition}, to detect outliers from the observations. Given a threshold $\theta$, we detect an outlier of an image point by evaluating an Euclidean distance between this image point and its refined image point we get from the \emph{Correspondences refinement} algorithm. If this distance is larger than the threshold $\theta$, this image point is an outlier. Note that coordinates of image points are integer numbers that are bounded by zero and two sizes of images. In this work, we expect to obtain correspondences in which their errors are smaller than 50 pixels, thus we often choose $\theta$ as an integer number between 20 and 70.

\begin{algorithm}[t]
\caption{Outliers Recognition}
\label{Alg:Recognition}
{
{\bf Input:} $\big\{\hat{\bf x}^{\text{\tiny(1)}}_m \,\leftrightarrow\,\hat{\bf x}^{\text{\tiny(2)}}_m \,\leftrightarrow\,\cdots \,\leftrightarrow\,\hat{\bf x}^{\text{\tiny(N)}}_m\big\}^M_{m=1}$~:~ Correspondences\\[-2mm]

~\hspace{11mm} $\theta$~:~ A threshold to recognize outliers.\\[-2mm]

{\bf Implementation:}\\[-2mm]

{\bf 1.} $\mathcal{O} \longleftarrow \emptyset$~:~ A set to store indices of outliers.\\[-2mm]

{\bf 2.} \textbf{Until} no new outliers are found in below steps:\\[-1mm]

\hspace{5mm}{\bf 2.1} $\big\{\bar{\bf x}_m^{\text{\tiny(1)}}\leftrightarrow\cdots\leftrightarrow\bar{\bf x}_m^{\text{\tiny(N)}}\big\}\overset{\text{Correspondences Refinement}}{\longleftarrow} \big\{\hat{\bf x}_m^{\text{\tiny(1)}}\leftrightarrow\cdots\leftrightarrow\hat{\bf x}_m^{\text{\tiny(N)}}\big\}$.\\[-1mm]

\hspace{5mm}{\bf 2.2} For all $1 \leq n \leq N$ and $1\leq m \leq M$,~~if $\|\hat{\bf x}_m^{\text{\tiny(n)}} - \bar{\bf x}_m^{\text{\tiny(n)}}\|_2 \geq \theta$\\[-1mm]

\hspace{10mm}{\bf a.} A new outlier $\hat{\bf x}_m^{\text{\tiny(n)}}$ is found: Adding the index $[n,m]^T$ to $\mathcal{O}$.\\[-2mm]

\hspace{10mm}{\bf b.} Remove $\hat{\bf x}_m^{\text{\tiny(n)}}$ from $\big\{\hat{\bf x}_m^{\text{\tiny(1)}}\leftrightarrow\cdots\leftrightarrow\hat{\bf x}_m^{\text{\tiny(N)}}\big\}$.\\[-2mm]

\hspace{10mm}{\bf c.} Repeat steps {\bf 2.1} and {\bf 2.2}.\\[-2mm]

{\bf Output:} $\mathcal{O}$~: The indices of outliers.
}
\end{algorithm}

\subsection{Self Correspondences Estimation}

Given the two-correspondences of at least ten key-points $\{\hat{\bf x}_{m_i}^{\text{\tiny(n)}}\leftrightarrow \hat{\bf x}_{m_i}^{\text{\tiny(n')}}\}^K_{i=1}$ $(K\geq 10)$, Proposition~\ref{Prop:SelftwoView} proposes the closed-form solution for one image point, for example $\hat{\bf x}_{m_j}^{\text{\tiny(n)}}$, from others. This result tells us to think about the situation that if the image point $\hat{\bf x}_{m_j}^{\text{\tiny(n)}}$ cannot observed or it is an outlier, while others from the at least ten two-correspondences are good estimations, we can recover the value for $\hat{\bf x}_{m_j}^{\text{\tiny(n)}}$ based on the closed-form given by Proposition~\ref{Prop:SelftwoView}. Similarly, given the `at least six'-correspondences of six key-points, Proposition~\ref{Prop:SelffiveView} gives a closed-form solution for any image point from the others. Thus, if these `at least six'-correspondences contains only one outlier, this unobservable image point or outlier can be recovered.

We notice that given an outlier, there are a lot of `possible' two-correspondences of at least ten key-points and `at least six'-correspondences of six key-points containing this outlier. We use the word `possible' to emphasize that these two- and `at least six'-correspondences has only one outlier. For each `possible' correspondences, we get one candidate for this outlier. Thus, there are lots of candidates for each outlier. From these candidates, we can take their average or their median to recover the value of the outlier. In this work, we use the median as Algorithm~\ref{Alg:Self_Estimation} \emph{Self estimation}.

\begin{algorithm}[t]
\caption{Self Estimation}
\label{Alg:Self_Estimation}
{
{\bf Input:} $\big\{\hat{\bf x}^{\text{\tiny(1)}}_m \,\leftrightarrow\,\hat{\bf x}^{\text{\tiny(2)}}_m \,\leftrightarrow\,\cdots \,\leftrightarrow\,\hat{\bf x}^{\text{\tiny(N)}}_m\big\}^M_{m=1}$~:~ Correspondences\\[-2mm]

~\hspace{11mm} $\mathcal{O}$~:~ The indices of outliers.\\[-2mm]

{\bf Implementation:}\\[-2mm]

For each outlier labeled by the index $[n,m]^T$ in $\mathcal{O}$, we re-estimate a value for $\hat{\bf x}_m^{\text{\tiny(n)}}$ by following:\\[-2mm]

{\bf 1.} $\mathcal{C}_{mn} \longleftarrow \emptyset$~:~ A set to store all candidates of the outlier $\hat{\bf x}_m^{\text{\tiny(n)}}$.\\[-1mm]

{\bf 2.} \textbf{\emph{Use Theorem 1:}}~ For all $1 \leq n' \leq N$ such that $n'\neq n$ and for all $1\leq m_1 < m_2 < m_3 < m_4 < m_5 < m_6 < m_7 < m_8 \leq M$ such that $m\notin\{m_1,m_2,m_3,m_4,m_5,m_6,m_7,m_8\}$,\\[-1mm]

\hspace{5mm}{\bf 2.1.} $\big\{\alpha_{m|m_1\ldots,m_8}, \beta_{m|m_1,\ldots,m_8}, \gamma_{m|m_1,\ldots,m_8}\big\} \longleftarrow \eqref{eq22}$ when\\[-2mm]

\hspace{13mm}$M$-indices $\{1,2,\ldots,9\}\longleftarrow\{m,m_1,\ldots,m_8\}$ and $\hat{\bf x}_{m_i}^{\text{\tiny(n)}} = [u_{m_i},v_{m_i}]^T$~,\quad $\hat{\bf x}_{m_i}^{\text{\tiny(n')}} = [u'_{m_i},v'_{m_i}]^T$. \\[-1mm]

\hspace{5mm}{\bf 2.2.} $\breve{\bf x}_m^{\text{\tiny(n)}} ~\overset{\eqref{eq23}}{\longleftarrow}~ \big\{\alpha_{m|m_1\ldots,m_8}\,,\, \beta_{m|m_1,\ldots,m_8}\,,\,\gamma_{m|m_1,\ldots,m_8}\big\}$\\[-1mm]

\hspace{5mm}{\bf 2.3.} add the candidate $\breve{\bf x}_m^{\text{\tiny(n)}}$ to $\mathcal{C}_{mn}$.\\[-1mm]

{\bf 3.} \textbf{\emph{Use Theorem 2:}} For all $1\leq m_1 < m_2 < m_3 < m_4 < m_5 \leq M$ and for all $1\leq n_1 < n_2 < n_3 < n_4 \leq N$ such that $m\notin\{m_1,m_2,m_3,m_4,m_5\}$ and $n\notin\{n_1,n_2,n_3,n_4\}$\\[-1mm]

\hspace{5mm}{\bf 3.1.} $\big\{\hat{\alpha}_{n|n_1n_2n_3n_4}, \hat{\beta}_{n|n_1n_2n_3n_4}, \hat{\gamma}_{n|n_1n_2n_3n_4}\big\} \longleftarrow \eqref{eq42}$ when\\[-2mm]

\hspace{12mm} $N$-indices $\{1,2,3,4,5\}\leftarrow\{n,n_1,n_2,n_3,n_4\}$,~ $M$-indices $\{1,2,3,4,5,6\}\leftarrow\{m,m_1,m_2,m_3,m_4,m_5\}$\\[-1mm]

\hspace{5mm}{\bf 3.2.} $\breve{\bf x}_m^{\text{\tiny(n)}} ~\overset{\eqref{eq44}}{\longleftarrow}~ \big\{\hat{\alpha}_{n|n_1n_2n_3n_4}\,,\, \hat{\beta}_{n|n_1n_2n_3n_4}\,,\,\hat{\gamma}_{n|n_1n_2n_3n_4}\big\}$\\[-1mm]

\hspace{5mm}{\bf 3.3.} add the candidate $\breve{\bf x}_m^{\text{\tiny(n)}}$ to $\mathcal{C}_{mn}$.\\[-1mm]

{\bf 4.} Re-estimate $\hat{\bf x}_m^{\text{\tiny(n)}}$ by taking the \emph{median} of all elements in $\mathcal{C}_{mn}$.\\[-2mm]

{\bf Output:} $\big\{\hat{\bf x}^{\text{\tiny(1)}}_m \,\leftrightarrow\,\hat{\bf x}^{\text{\tiny(2)}}_m \,\leftrightarrow\,\cdots \,\leftrightarrow\,\hat{\bf x}^{\text{\tiny(N)}}_m\big\}^M_{m=1}$~:~ Correspondences\\[-2mm]

\hspace{15mm} in which the outliers in $\mathcal{O}$ are re-estimated.\\[-3mm]
}
\end{algorithm}

\subsection{Main algorithm for refinement}

Three applications for the refinement of correspondences are proposed separately in the last three sections. This section studies how to combine these three applications to get an optimal algorithm on refining correspondences. Firstly, we mention that the \emph{Outliers recognition} depends on the \emph{Correspondences refinement} as showing in its step 2.1. In addition, if there are no outliers finding by the \emph{Outlier recognition}, the refinements from the \emph{Correspondences refinement} within and without the two algorithms \emph{Outliers recognition} and \emph{Self estimations} are the same. It is obvious that the refinement from the \emph{Correspondences refinement} will be better if we remove all outliers from its input. Therefore, an iterative process - in which the \emph{Correspondences refinement} will help the \emph{Outliers recognition} on detecting correctly outliers, and vice versa the \emph{Outliers recognition} helps the \emph{Correspondences refinement} to obtain better refinements when some outliers are removed or recovered - is needed. Secondly, to guarantee the very good input for the \emph{Correspondences refinement}, the threshold $\theta$ used on detecting outliers should be small. As a consequence of the small threshold $\theta$, some good observed image points can be recognized as outliers. However if all the remain image points (not the outliers detected by the \emph{Outliers recognition}) are good, the \emph{Self estimation} will give back good values for these good observed image points. Moreover, this algorithm can give good values for the real outliers.

\newpage

From the above two discussions, we propose a combination of the three algorithms \emph{Correspondences refinement}, \emph{Outliers recognition} and \emph{Self estimation} as a final algorithm for refining the observed correspondences. It is the \emph{Main algorithm} and given by Algorithm~\ref{Alg:Main}. There are $\kappa$ iterations and each iteration is an arrangement that the \emph{Correspondences refinement} and then the \emph{Outliers recognition} are implemented via the step 1, then the \emph{Self estimation} is implemented via the step 2 and the \emph{Correspondences refinement} is again finished via the step 3. In each iteration we use the different threshold $\theta_i$ to detect the outliers $\mathcal{O}_i$. These threshold are a descending series, i.e., $\theta_1 > \theta_2 > \cdots > \theta_{\kappa}$.

Logically, the iterative is stopped when the threshold is too small but there are no outlier finding by the step 2. To simplify, we fix the number of iterative $\kappa$ to three and choose the thresholds of 60, 40 and 20 for each iterative in the hope that the final refinement will have an error of about 50 pixels.

\begin{algorithm}[t]
\caption{Main Algorithm}
\label{Alg:Main}
{
{\bf Input:} $\big\{\hat{\bf x}^{\text{\tiny(1)}}_m \,\leftrightarrow\,\hat{\bf x}^{\text{\tiny(2)}}_m \,\leftrightarrow\,\cdots \,\leftrightarrow\,\hat{\bf x}^{\text{\tiny(N)}}_m\big\}^M_{m=1}$~:~ Correspondences\\[-2mm]

\hspace{11mm}~ $\theta_1,\theta_2,\ldots,\theta_{\kappa}$: A sequence of thresholds to recognize outliers\\[-2mm]

{\bf Implementation:}\\[-2mm]

\textbf{\emph{Repeat $\kappa$ times the following steps:}}\\[-3mm]

(Assuming that we are at $i$th time)\\[-1mm]

{\bf 1.} $\mathcal{O}_i ~\overset{\text{Outliers Recognition}}{\longleftarrow}~ \Big\{\big\{\hat{\bf x}^{\text{\tiny(1)}}_m \,\leftrightarrow\,\cdots \,\leftrightarrow\,\hat{\bf x}^{\text{\tiny(N)}}_m\big\}~,~~\theta_i\Big\}$.\\

{\bf 2.} $\big\{\breve{\bf x}^{\text{\tiny(1)}}_m \,\leftrightarrow\,\cdots \,\leftrightarrow\,\breve{\bf x}^{\text{\tiny(N)}}_m\big\}\overset{\text{Self Estimation}}{\longleftarrow}\Big\{\big\{\hat{\bf x}^{\text{\tiny(1)}}_m \,\leftrightarrow\,\cdots \,\leftrightarrow\,\hat{\bf x}^{\text{\tiny(N)}}_m\big\}~,~\mathcal{O}_i\Big\}$.\\

{\bf 3.} $\big\{\bar{\bf x}^{\text{\tiny(1)}}_m \,\leftrightarrow\,\cdots \,\leftrightarrow\,\bar{\bf x}^{\text{\tiny(N)}}_m\big\}\overset{\text{Correspondences Refinement}}{\longleftarrow}\big\{\breve{\bf x}^{\text{\tiny(1)}}_m \,\leftrightarrow\,\cdots \,\leftrightarrow\,\breve{\bf x}^{\text{\tiny(N)}}_m\big\}$.\\

{\bf 4.} $\hat{\bf x}_m^{\text{\tiny(n)}} \longleftarrow \bar{\bf x}_m^{\text{\tiny(n)}}$ ~~for all~~ $m,n$, ~~(Update for the next time)\\

{\bf Output:} $\big\{\bar{\bf x}^{\text{\tiny(1)}}_m \,\leftrightarrow\,\bar{\bf x}^{\text{\tiny(2)}}_m \,\leftrightarrow\,\cdots \,\leftrightarrow\,\bar{\bf x}^{\text{\tiny(N)}}_m\big\}^M_{m=1}$~:~ Refinements.\\[-1mm]
}
\end{algorithm}

\section{Experiments and Evaluations}\label{sec:evaluation}

\subsection{Experiment Settings}
Experiments for studying this work are from our small project on 3D-reconstructing the Buddha statue. The Buddha statue is marked by 200 small black dots. We divide the statue by 21 areas such as the face, the left hand, the right hand, the back, the left leg, the right leg, etc. Two hundreds marked black dots are chosen such that each area has at least ten and at most fifteen marked dots. To reconstruct one area we take around 20 to 30 images. Thus, for each area we have an experiment to study the $N$-view and $M$-point problem where $N$ is the number of images ($20 \leq N \leq 30$) and $M$ is the number of marked dots appearing in this area ($10\leq M \leq 15$).

\begin{figure}[t]
  \centering
  \includegraphics[width=5in]{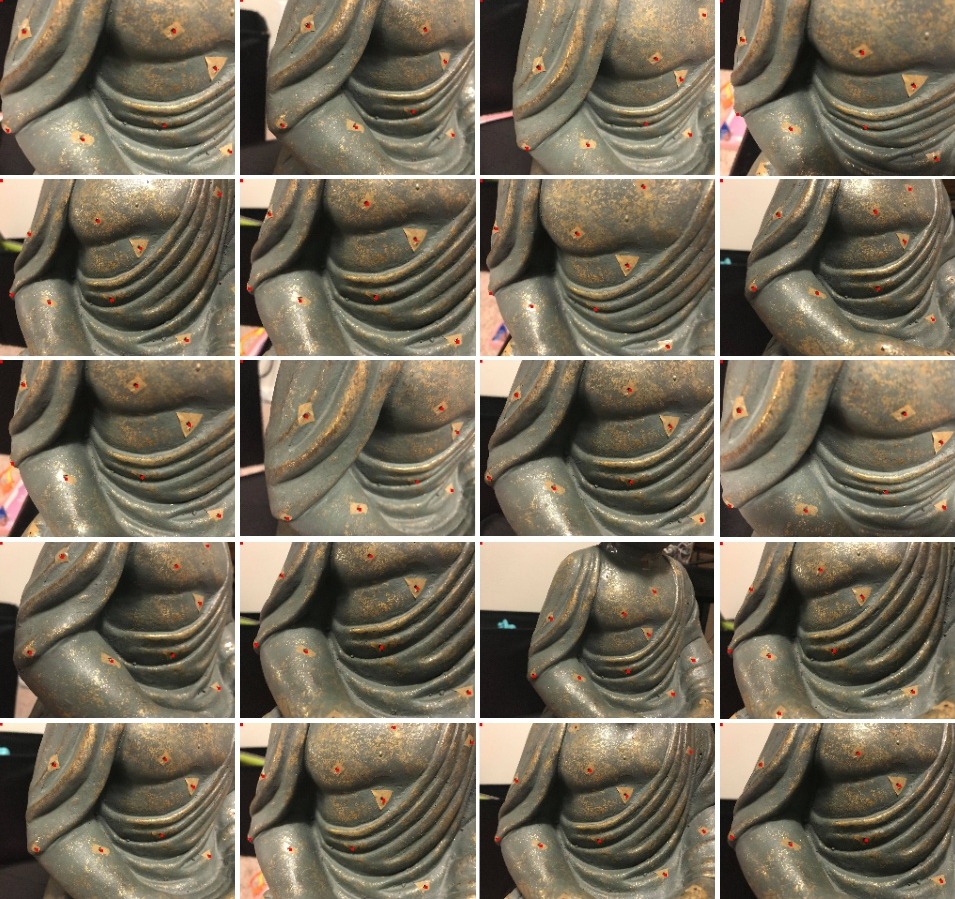}
  \caption{An experiment of the 20-view and 10-point problem for the right front body reconstruction. This is one from the twenty one experiments corresponding to twenty one areas of the Buddha statue. The \emph{red dots} are the ground-truths for the ten-correspondences.}
  \label{Fig:20View10Points}
\end{figure}

To study the correspondences refinement, we manually determine the marked dots in each image. Since the marked black dots are very clear in images, these manual coordinates are very accurate and we use them as the ground-truths for the correspondences. Note that in the experiment of $N$-view and $M$-point, some images do not capture all $M$ marked dots. If the marked dot is missed on the image, its coordinates will be -1. Concretely, let us denote
\begin{figure*}[t]
  \centering
  \includegraphics[width=6in]{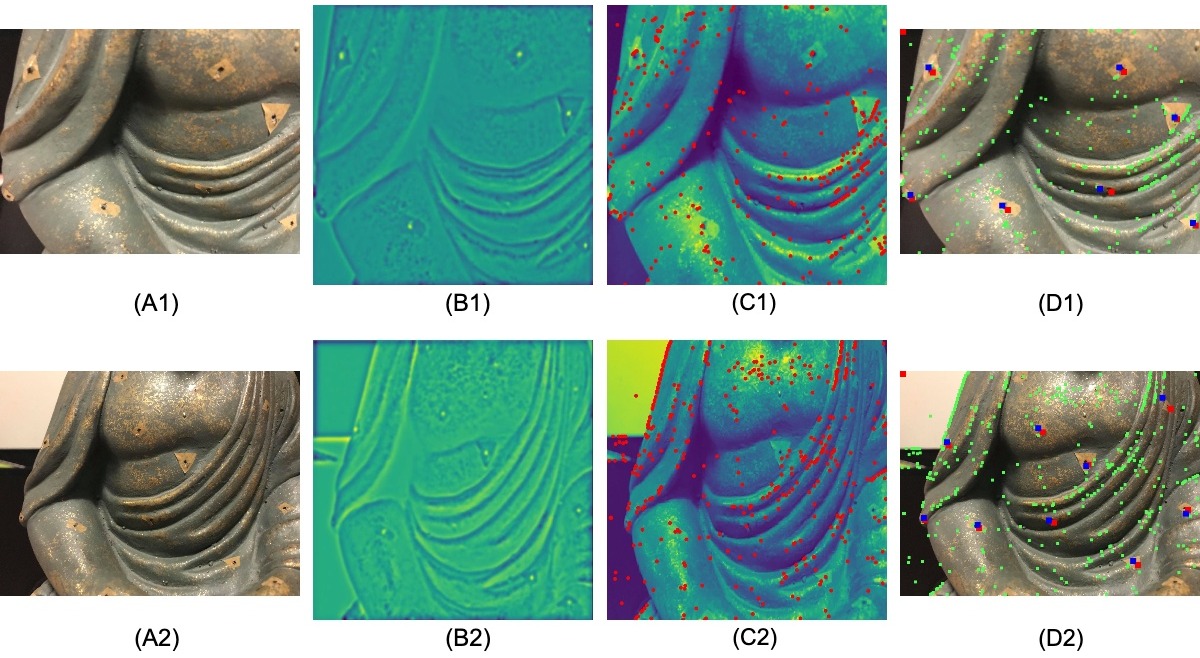}
  \caption{A process on estimating the correspondences based on the ground-truths and SIFT and SURF algorithms. (A): original images, (B): one of the \emph{difference of Gaussian maps}, (C): images with the key-points found by SIFT and SURF, and (D): image-points (\emph{blue squares}) are found from the key-points (\emph{green circles}) closest to the ground truths (\emph{red squares}).}
  \label{Fig:SIFT_SURF}
\end{figure*}
\begin{equation}\label{eq59}
\big\{\overset{\circ}{\bf x}\hspace{-1mm}~_m^{\text{\tiny(1)}} \leftrightarrow\overset{\circ}{\bf x}\hspace{-1mm}~_m^{\text{\tiny(2)}} \leftrightarrow \overset{\circ}{\bf x}\hspace{-1mm}~_m^{\text{\tiny(3)}} \leftrightarrow \cdots \leftrightarrow \overset{\circ}{\bf x}\hspace{-1mm}~_m^{\text{\tiny(N)}}\big\}^M_{m=1}
\end{equation}
be the ground-truths for the correspondences of the $N$-view and $M$-point. The ground-truth $\overset{\circ}{\bf x}\hspace{-1mm}~_m^{\text{\tiny(n)}}$ is $[-1,-1]^T$ if the $m$th marked dot is not captured by the $n$th image. Figure~\ref{Fig:20View10Points} demonstrates an experiment of the 20-view and 10-point reconstruction for the area `\emph{right front body}'. The ground-truths of ten-correspondences in each image are manually highlighted by red dots. The first image at the position (1,1) has only seven motions, the fifteenth image at the position (4,3) has nine motions, and no image captures all ten.

The estimations of the correspondences are found based on the SIFT~\citep{Lowe1999} and SURF~\citep{Bay2006} algorithms. The code for SIFT is based on~\citep{Lowe2004} and the code for SURF is based on~\citep{Evans2009}. A process of estimating the correspondences is explained by the pictures in Figure~\ref{Fig:SIFT_SURF}. Let us explain how to estimate the correspondences on the first image in the experiment given by Figure~\ref{Fig:20View10Points}. The picture (A1) is the original image capturing the right front hand of the Buddha's statue. The seven image points from the ten-correspondences of this experiments are marked by seven black dots. (B1) is one of the `difference of Gaussian' (DOG) map of the image. The key-points are determined based on these DOG maps. All the key-points found by the SIFT and SURF algorithms are red dots in the picture (C1). From lots of key-points, since the ground-truths of the seven image points are known, as highlighted by the `red dots' in picture (D1), we easily choose the closest one and use it as the estimation for each image point. The estimations of the seven image points are highlighted by the `blue dots', and the remain key-points are indicated by the `small green dots'. The Euclidean distances between the estimations and the ground truths are 48.54, 43.43, 32.20, 33.36, 53.60, 129.02 and 28.45 (pixels). A same process on estimating nine image points from the correspondences in the nineteenth image is presented by (A2), (B2), (C2) and (D2). The differences between the estimations and the ground truths are 36.02, 38.35, 76.00, 150.83, 45.81, 31.00, 48.62, 43.65 and 102.86. The average errors of the estimations on the first image and the nineteenth image are \textbf{52.66} pixels and \textbf{63.68} pixels, respectively.

\newpage

Note that without the ground truths, estimating the correspondences is a difficult problem. Since this work is focused on studying the correspondences refinement, not the correspondences estimation, we simplify this step by manually determining the ground-truths for the correspondences. Thanks for this hard work, we have total 21 experiments with 402 images and 4398 image points. For each experiment, the ground truths are really accurate, the estimations are reliable for some real applications on the 3D reconstruction. Given the estimation
\begin{equation}\label{eq60}
\big\{\hat{\bf x}_m^{\text{\tiny(1)}} \leftrightarrow\hat{\bf x}_m^{\text{\tiny(2)}} \leftrightarrow \hat{\bf x}_m^{\text{\tiny(3)}} \leftrightarrow \cdots \leftrightarrow \hat{\bf x}_m^{\text{\tiny(N)}}\big\}^M_{m=1}
\end{equation}
we shall apply the \emph{Main algorithm} (Algorithm~\ref{Alg:Main}) to evaluate how efficient this work is on the correspondences refinement. The number times is three ($\kappa = 3$) and the sequence of thresholds $\{\theta_1, \theta_2, \theta_3\}$ is $\{60, 40, 20\}$. Thus, there are three refinements $\{\bar{\bf x}_m^{\text{\tiny(n)}}\}$ for the estimations $\{\hat{\bf x}_m^{\text{\tiny(n)}}\}$ corresponding to (i) the first iteration with the threshold $\theta_1 = 60$, (ii) the first and second iterations with the thresholds $(\theta_1,\theta_2) = (60, 40)$, and (iii) the output of the \emph{Main algorithm} in which we use all three iterations. We call these three refinements by the `refinements with one iteration', the `refinements with two iterations' and the `refinements with three iterations', respectively. The Euclidean distances $\|\hat{\bf x}_m^{\text{\tiny(n)}} - \overset{\circ}{\bf x}\hspace{-1mm}~_m^{\text{\tiny(n)}}\|_2$ and $\|\bar{\bf x}_m^{\text{\tiny(n)}} - \overset{\circ}{\bf x}\hspace{-1mm}~_m^{\text{\tiny(n)}}\|_2$ are called the \emph{point-errors} of the estimated and refined image points $\hat{\bf x}_m^{\text{\tiny(n)}}$ and $\bar{\bf x}_m^{\text{\tiny(n)}}$. The mean point-errors
\begin{equation}\label{eq61}
\frac{1}{M}\sum^M_{m=1}\big\|\hat{\bf x}_m^{\text{\tiny(n)}} - \overset{\circ}{\bf x}\hspace{-1mm}~_m^{\text{\tiny(n)}}\big\|_2 \quad \text{and} \quad \frac{1}{M}\sum^M_{m=1}\big\|\bar{\bf x}_m^{\text{\tiny(n)}} - \overset{\circ}{\bf x}\hspace{-1mm}~_m^{\text{\tiny(n)}}\big\|_2
\end{equation}
are called the \emph{image-errors} of all the estimated and refined image points on the $n$th image. We call these errors shortly by the \emph{$n$th image-errors}. Finally, the mean image-errors
\begin{equation}\label{eq62}
\frac{1}{NM}\sum_{n,m}\big\|\hat{\bf x}_m^{\text{\tiny(n)}} - \overset{\circ}{\bf x}\hspace{-1mm}~_m^{\text{\tiny(n)}}\big\|_2 \quad \text{and} \quad \frac{1}{NM}\sum_{n,m}\big\|\bar{\bf x}_m^{\text{\tiny(n)}} - \overset{\circ}{\bf x}\hspace{-1mm}~_m^{\text{\tiny(n)}}\big\|_2
\end{equation}
are called the \emph{correspondence-errors} of the estimated correspondences $\{\hat{\bf x}_m^{\text{\tiny(1)}}\leftrightarrow\hat{\bf x}_m^{\text{\tiny(2)}}\leftrightarrow\cdots\leftrightarrow\hat{\bf x}_m^{\text{\tiny(N)}}\}$ and the refined correspondences $\{\bar{\bf x}_m^{\text{\tiny(1)}}\leftrightarrow\bar{\bf x}_m^{\text{\tiny(2)}}\leftrightarrow\cdots\leftrightarrow\bar{\bf x}_m^{\text{\tiny(N)}}\}$. These errors together will give us a good evaluation of the correspondences refinements. Lastly, in the \emph{Main algorithm} we use, the number of iterations in the \emph{Correspondences refinement} is 10.

\subsection{Evaluations}

Figure~\ref{Fig:OneExperiment} gives an evaluation of the $21$-view and $12$-point experiment for the area `front and bottom body'. The upper sub-figure presents the image-errors of the estimations and the three refinements for twenty one images. This result shows that for all twenty one images, the refinements with two iterations are always better than one with one iteration, and the final refinements corresponding with three iterations are the best. Except for the seventeenth image, all the refinements are better than the estimations. Figure~\ref{Fig:Simulated_One} shows the ground-truths, the estimations and the three refinements of the ten from twelve image points on the twenty first image and the thirteenth image. Based on the third refinements, the twenty first image is the best and the thirteenth image is the worst. The image-errors on the best are 55.69 (the estimations), 44.24 (the first refinements), 35.07 (the second refinements) and 33.93 (the third refinements). The image-errors on the worst are 116.79 (the estimations), 106.24 (the first refinements), 81.59 (the second refinements) and 72.67 (the third refinements). Since the differences between the estimations and the ground-truths on the twenty first image are small, all three its refinements are good. The point-errors of the fifth and ninth image points on the thirteenth image are very large, so the refinements for this image are not good. Interestingly, the differences between the estimations and the ground-truths on the seventh image are large, but its three refinements are very good. This is an good example for us to see the role of the \emph{Self estimation} algorithm on refining the correspondences.

\begin{figure}[t]
  \centering
  \includegraphics[width=4.5in]{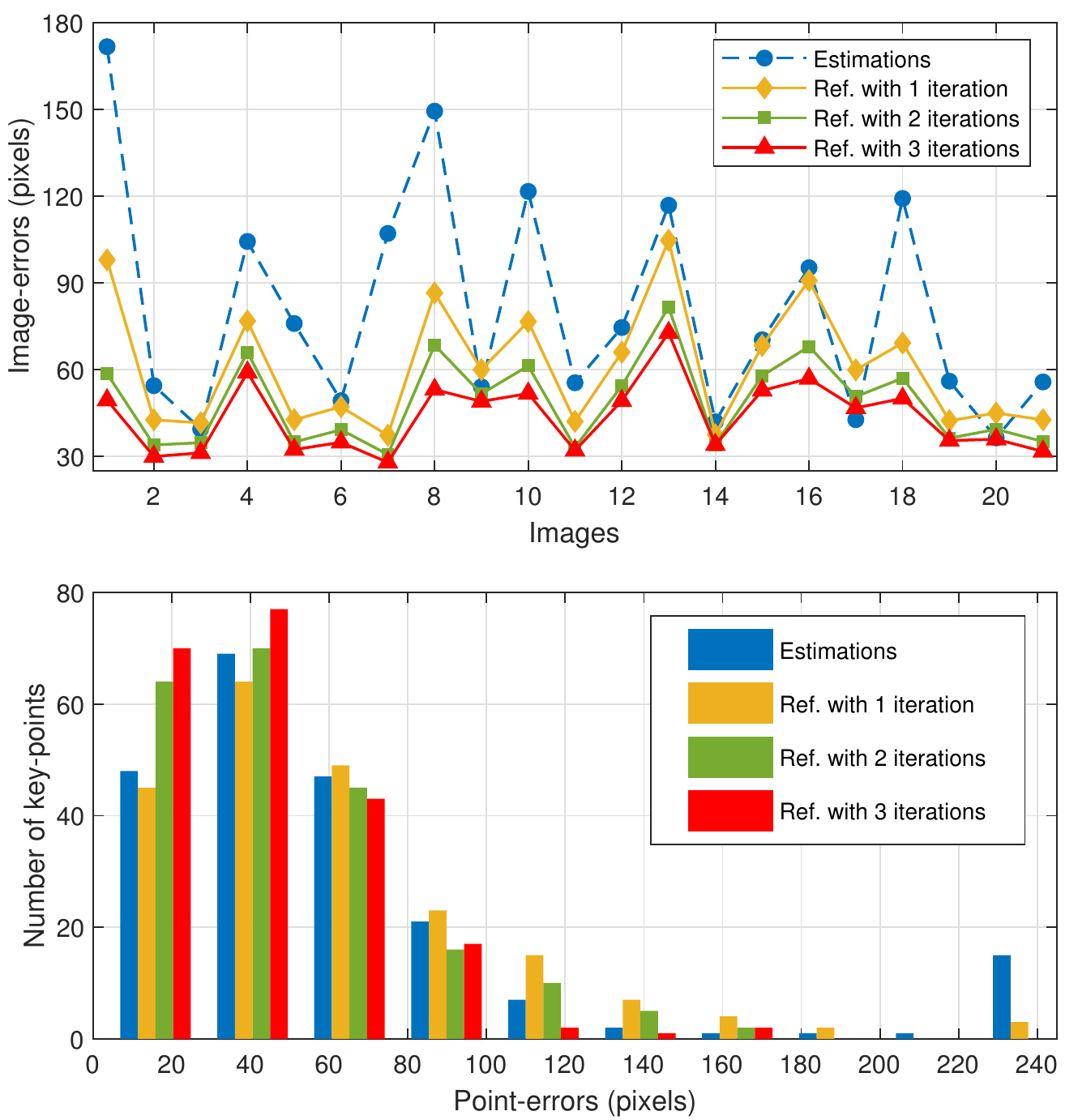}
  \caption{An evaluation of the 21-view and 12-motion experiment. \emph{Upper:} The image-errors of twenty one images. \emph{Lower:} The histogram of all point-errors.}
  \label{Fig:OneExperiment}
\end{figure}

\begin{figure}[t]
  \centering
  \includegraphics[width=6.25in]{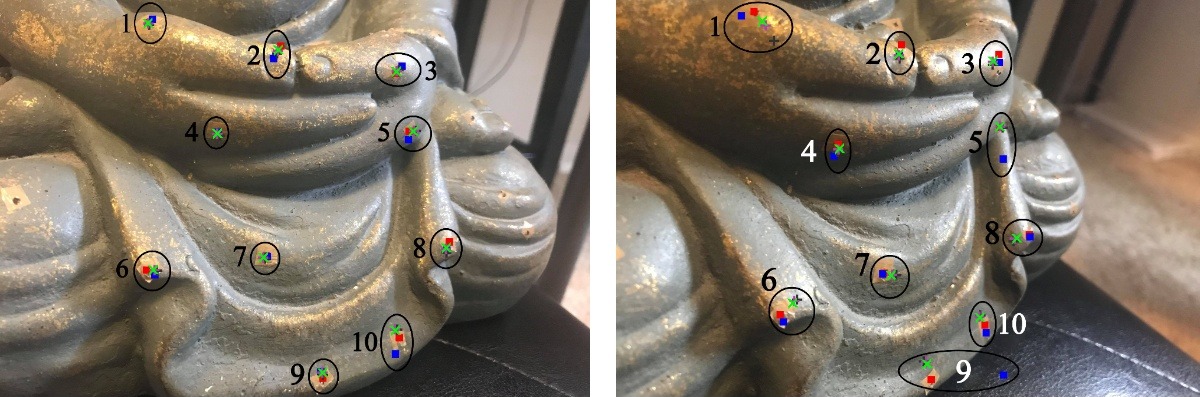}
  \caption{The ground-truths and the estimated and refined image points on the twenty first image (upper) and the thirteenth image (lower). The ground-truths are the \emph{red squares}, the estimations are the \emph{blue squares}, the refinements with one iteration are the 'black plus symbols', the refinements with two iterations are the `pink plus symbols', and the refinements with three iterations are the `green cross symbols'. The ground-truth and its estimations, refinements are grouped by the `black ellipses' and indicated by the corresponded number.}
  \label{Fig:Simulated_One}
\end{figure}

The lower sub-figure in Figure~\ref{Fig:OneExperiment} presents the histogram of all point-errors on the discussed 21-view and 12-motion experiments. If an outlier is defined by its point-error greater than 120 and a good image point is defined by its point-error less than 50, this histogram tells us that there are around 20 outliers and 100 good image points in the estimations, and around 4 outliers and 140 good image points in the refinements with three iterations. The smaller number of outliers in the refinements comparing with the estimations validates the role of the \emph{Outlier recognition}, and the larger number of good image points in the refinements comparing with the estimations again validates the efficient of the \emph{Self estimation}. Finally, the correspondence-errors of this experiment are 80.95 (the estimations), 63.76 (the refinements with one iteration), 48.63 (the refinements with two iterations) and 43.34 (the refinements with three iterations).

\begin{figure}[h!]
  \centering
  \includegraphics[width=4.5in]{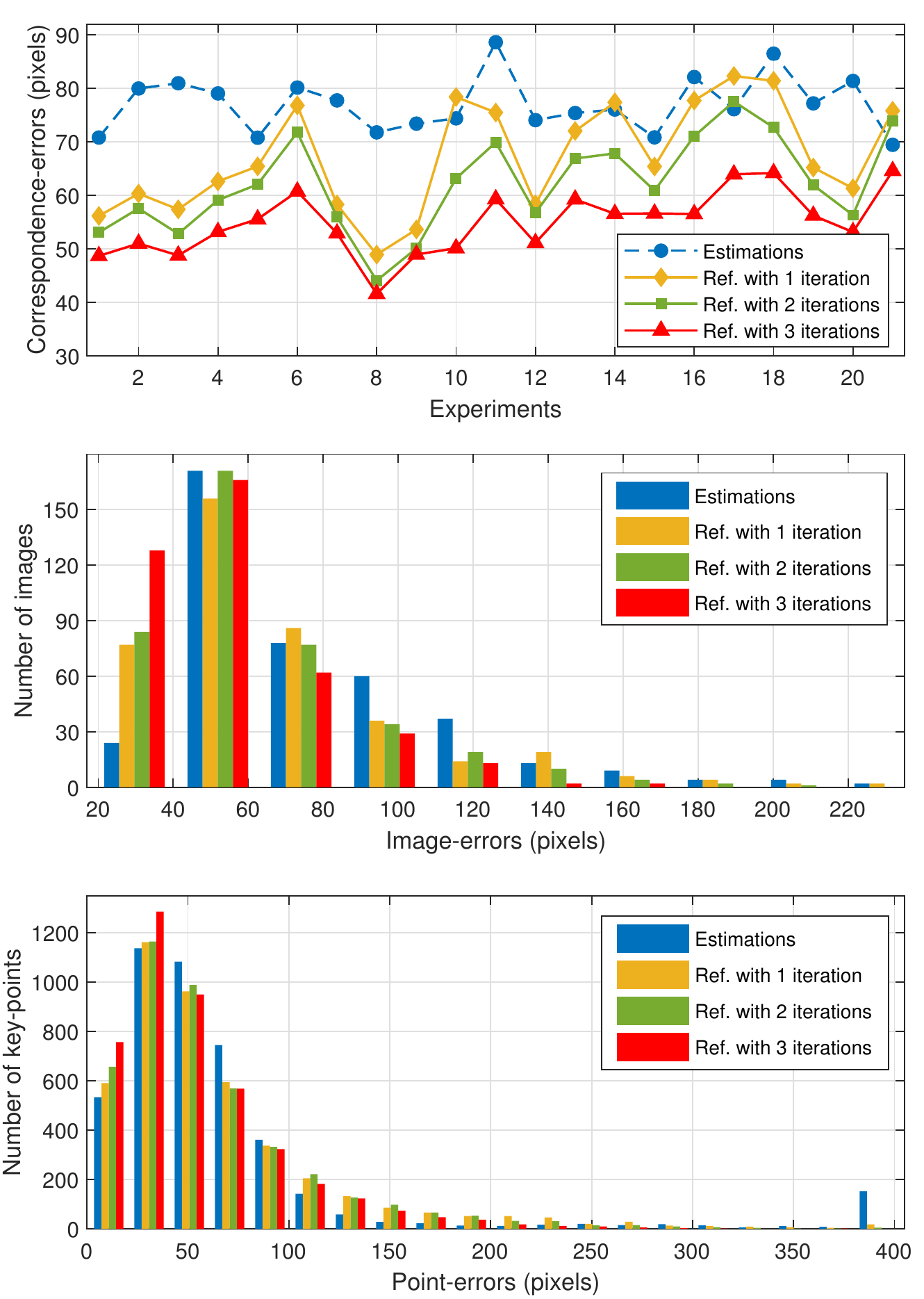}
  \caption{Evaluations of twenty one experiments from the 3D reconstruction of Buddha statue. These evaluations are based on the correspondence-error for the correspondences, the image-error for the images and the point-error for the key-points.}
  \label{Fig:AllExperiments}
\end{figure}

Figure~\ref{Fig:AllExperiments} presents the correspondence-errors, image-errors and point-errors for all 21 experiments. The first sub-figure presents all the correspondence-errors given by Eq.~\eqref{eq62} for all 21 experiments. The result shows that the correspondences refinement proposed by this paper works very well for the eighth experiment but does not work well for the last experiment. It is important to note that the refinements with two iterations are always better than the refinements with one iteration, and the refinements with three iterations are always the best. The second sub-figure simulates the histogram of 402 image-errors given by Eq.~\eqref{eq61} for 402 image and the last sub-figure simulates the histogram of 4398 point-errors. The average of the point-errors corresponding to the estimations is 76.93, and ones corresponding to the refinements with one, two and three iterations are 67.29, 62.29 and 55.15, respectively. Note that the averages of the point-errors and the image-errors are the same. When we replace the average by the median, these values from the histogram of the point-errors are 50.15 (the estimations) and (49.32, 47.59, 43.20) (the refinements with one, two and three iterations). These numbers say that 50\% estimated image points have errors of less than 50 pixels, and 50\% final refinements have errors of less than 43 pixels. The percentage of final refinements with an error of less than 50 pixels is 57\%. Let us detect the outliers by its point-error greater than 200, then there are 274 outliers from the estimations, 210 outliers from the refinements with one iteration, 121 outliers from the refinements with two iterations, and 51 outliers from the refinements with three iterations. The medians from the histogram of the image points are 63.95 (the estimations) and (56.05, 51.99, 48.16) (the refinements with one, two and three iterations). If we define the good image by its image-error less than 50, then the refinements with three iterations has more than 50\% good images and the estimations has only 23\% good image.

Overall, evaluating based on the ground-truths, an improvement of the \emph{Main algorithm}, comparing with the estimations and the other two refinements corresponding with one and two iterations, is consistent.

\newpage

\section{Conclusion}\label{sec:conclusion}

This paper concentrates on improving the correspondences accuracy, the most important problem in the 3D reconstruction based on multiple images. Assuming that correspondences are estimated from some existing methods, this paper proposes an independent algorithm which is based on two rank deficient matrices to refine them. The first $M\times 9$ rank deficient matrix is constructed using correspondences from two images and $M$ key-points. When the number of key-points is at least nine ($M\geq 9$), the rank deficient property of this matrix confirms its minimum singular value should be zero. Hence, if this value is positive, we know that there is noise. Of course if this value is too large, we really believe that there are bad correspondences and have some outliers in them. The second $N\times 5$ rank deficient matrix is constructed using correspondences from $N$ images and six key-points. Similar to the first one, when there are at least five images ($N\geq 5$), this rank deficient property helps us to recognize noise and outliers in the correspondences. These two rank deficient matrices are found and proven by Theorem~\ref{Thm:RanktwoView} and Theorem~\ref{Thm:RankfiveView}. \\

\noindent The proposed correspondences refinement algorithm, Algorithm~\ref{Alg:Main}, is validated by a number of practical experiments. To study carefully the correspondences refinement and other future researches such as the world points estimation (the triangulation), the building point clouds and surface (the bundle adjustment), etc., we generate a lot of `\emph{reliable}' experiments on the 3D-reconstruction of the Buddha statue project. From this project, we have 402 images capturing 200 special world points marked on the Buddha statue. Some of these 200 world points captured by the image are defined as the key-points on this image. Since they are very clear by looking, we manually determine the ground-truths for all key-points on all images. On another hand, the Euclidean distances between these 200 world points are measured to determine their relative positions. Thus, we emphasize that there are good experiments in which the ground truths for the correspondences, the world points and the projection matrices are highly accurate, for us to study the correspondences estimation, the correspondences refinement, the world points estimations, the building point clouds and surface, and more. \\

\noindent The results working on the above reliable experiments demonstrate that most of the correspondences will be improved by the proposed algorithm on the point of recognizing outliers, recovering missing key-points and reduce point-errors. The refined correspondences we obtain are sufficiently good for us to continue our researches on the world points estimation and the building point clouds.\\

\noindent One limitation of this paper is that it has not exploited enough Theorem~\ref{Thm:RankfiveView} to refine correspondences. We believe that if Theorem~\ref{Thm:RankfiveView} is used better then \emph{Main algorithm} will be better. Our future work will focus on how to exploit Theorem~\ref{Thm:RankfiveView} to improve our current results.

\bibliographystyle{plainnat}
\bibliography{refs}

\begin{thebibliography}{49}
\providecommand{\natexlab}[1]{#1}
\providecommand{\url}[1]{\texttt{#1}}
\expandafter\ifx\csname urlstyle\endcsname\relax
  \providecommand{\doi}[1]{doi: #1}\else
  \providecommand{\doi}{doi: \begingroup \urlstyle{rm}\Url}\fi

\bibitem[Agarwal et~al.(2010)Agarwal, Snavely, Seitz, and
  Szeliski]{Agarwal2010}
Sameer Agarwal, Noah Snavely, Steven~M. Seitz, and Richard Szeliski.
\newblock Bundle adjustment in the large.
\newblock In \emph{Proceedings of the 11th European Conference on Computer
  Vision (ECCV), Part II}, pages 29--42, Crete, Greece, 2010.

\bibitem[Alhichri and Kamel(2003)]{Alhichri2003}
Haikel~Salem Alhichri and Mohamed Kamel.
\newblock Virtual circles: a new set of features for fast image registration.
\newblock \emph{Pattern Recognit. Lett.}, 24\penalty0 (9-10):\penalty0
  1181--1190, 2003.

\bibitem[Barath(2018)]{Barath2018}
Daniel Barath.
\newblock Five-point fundamental matrix estimation for uncalibrated cameras.
\newblock In \emph{Proceedings of the {IEEE} Conference on Computer Vision and
  Pattern Recognition (CVPR)}, pages 235--243, Salt Lake City, UT, 2018.

\bibitem[Bay et~al.(2006)Bay, Tuytelaars, and Gool]{Bay2006}
Herbert Bay, Tinne Tuytelaars, and Luc~Van Gool.
\newblock {SURF:} speeded up robust features.
\newblock In \emph{Proceedings of the 9th European Conference on Computer
  Vision (ECCV), Part I}, pages 404--417, Graz, Austria, 2006.

\bibitem[Bresson et~al.(2017)Bresson, Alsayed, Yu, and Glaser]{Bresson2017}
Guillaume Bresson, Zayed Alsayed, Li~Yu, and Sebastien Glaser.
\newblock Simultaneous localization and mapping: {A} survey of current trends
  in autonomous driving.
\newblock \emph{{IEEE} Trans. Intell. Veh.}, 2\penalty0 (3):\penalty0 194--220,
  2017.

\bibitem[Cadena et~al.(2016)Cadena, Carlone, Carrillo, Latif, Scaramuzza,
  Neira, Reid, and Leonard]{Cadena2016}
Cesar Cadena, Luca Carlone, Henry Carrillo, Yasir Latif, Davide Scaramuzza,
  Jos{\'{e}} Neira, Ian Reid, and John~J. Leonard.
\newblock Past, present, and future of simultaneous localization and mapping:
  Toward the robust-perception age.
\newblock \emph{{IEEE} Trans. Robotics}, 32\penalty0 (6):\penalty0 1309--1332,
  2016.

\bibitem[Chuang et~al.(2021)Chuang, Ho, Umam, Chen, Hwang, and
  Chen]{Chuang2021}
Jen{-}Hui Chuang, Chih{-}Hui Ho, Ardian Umam, HsinYi Chen, Jenq{-}Neng Hwang,
  and Tai{-}An Chen.
\newblock Geometry-based camera calibration using closed-form solution of
  principal line.
\newblock \emph{{IEEE} Trans. Image Process.}, 30:\penalty0 2599--2610, 2021.

\bibitem[Doll{\'{a}}r and Zitnick(2013)]{dollar2013}
Piotr Doll{\'{a}}r and C.~Lawrence Zitnick.
\newblock Structured forests for fast edge detection.
\newblock In \emph{Proceedings of the {IEEE} International Conference on
  Computer Vision (ICCV)}, pages 1841--1848, Sydney, Australia, 2013.

\bibitem[Evans(2009)]{Evans2009}
Christopher Evans.
\newblock Notes on the {O}pen{SURF} library.
\newblock Technical report, University of Bristol, 2009.

\bibitem[Harltey and Zisserman(2006)]{Hartley2004}
Richard~I. Harltey and Andrew Zisserman.
\newblock \emph{Multiple view geometry in computer vision {(2.} ed.)}.
\newblock Cambridge University Press, 2006.

\bibitem[Hartley and Sturm(1997)]{Hartley1997triangulation}
Richard~I. Hartley and Peter~F. Sturm.
\newblock Triangulation.
\newblock \emph{Comput. Vis. Image Underst.}, 68\penalty0 (2):\penalty0
  146--157, 1997.

\bibitem[Heimann and Meinzer(2009)]{Heimann2009}
Tobias Heimann and Hans{-}Peter Meinzer.
\newblock Statistical shape models for 3{D} medical image segmentation: {A}
  review.
\newblock \emph{Medical Image Anal.}, 13\penalty0 (4):\penalty0 543--563, 2009.

\bibitem[Hsieh et~al.(1996)Hsieh, Liao, Fan, and Ko]{Hsieh1996}
Jun{-}Wei Hsieh, Hong{-}Yuan~Mark Liao, Kuo{-}Chin Fan, and Ming{-}Tak Ko.
\newblock A fast algorithm for image registration without predetermining
  correspondences.
\newblock In \emph{Proceedings of the 13th International Conference on Pattern
  Recognition (ICPR)}, pages 765--769, Vienna, Austria, 1996.

\bibitem[Jian and Vemuri(2011)]{Jian2010}
Bing Jian and Baba~C. Vemuri.
\newblock Robust point set registration using gaussian mixture models.
\newblock \emph{{IEEE} Trans. Pattern Anal. Mach. Intell.}, 33\penalty0
  (8):\penalty0 1633--1645, 2011.

\bibitem[Ke and Sukthankar(2004)]{Yan2004}
Yan Ke and Rahul Sukthankar.
\newblock {PCA-SIFT:} {A} more distinctive representation for local image
  descriptors.
\newblock In \emph{Proceedings of the 2004 {IEEE} Computer Society Conference
  on Computer Vision and Pattern Recognition (CVPR)}, pages 506--513,
  Washington, DC, 2004.

\bibitem[Lawler(1963)]{Lawler1963}
Eugene~L Lawler.
\newblock The quadratic assignment problem.
\newblock \emph{Management science}, 9\penalty0 (4):\penalty0 586--599, 1963.

\bibitem[Le et~al.(2018)Le, Ho, and Le]{Kien2018}
Trung{-}Kien Le, King~Choi Ho, and Trung{-}Hieu Le.
\newblock Rank properties for matrices constructed from time differences of
  arrival.
\newblock \emph{{IEEE} Trans. Signal Process.}, 66\penalty0 (13):\penalty0
  3491--3503, 2018.

\bibitem[Lowe(1999)]{Lowe1999}
David~G. Lowe.
\newblock Object recognition from local scale-invariant features.
\newblock In \emph{Proceedings of the International Conference on Computer
  Vision (ICCV)}, pages 1150--1157, Kerkyra, Corfu, Greece, 1999.

\bibitem[Lowe(2004)]{Lowe2004}
David~G. Lowe.
\newblock Distinctive {I}mage {F}eatures from {S}cale-{I}nvariant {K}eypoints.
\newblock \emph{Int. J. Comput. Vis.}, 60\penalty0 (2):\penalty0 91--110, 2004.

\bibitem[McInerney and Terzopoulos(1996)]{McInerney1996}
Tim McInerney and Demetri Terzopoulos.
\newblock Deformable models in medical image analysis: {A} survey.
\newblock \emph{Medical Image Anal.}, 1\penalty0 (2):\penalty0 91--108, 1996.

\bibitem[Mikolajczyk and Schmid(2005)]{Mikolajczyk2004}
Krystian Mikolajczyk and Cordelia Schmid.
\newblock A performance evaluation of local descriptors.
\newblock \emph{{IEEE} Trans. Pattern Anal. Mach. Intell.}, 27\penalty0
  (10):\penalty0 1615--1630, 2005.

\bibitem[Munkres(1957)]{Munkres1957}
James Munkres.
\newblock Algorithms for the assignment and transportation problems.
\newblock \emph{Journal of the society for industrial and applied mathematics},
  5\penalty0 (1):\penalty0 32--38, 1957.

\bibitem[Myronenko and Song(2010)]{Myronenko2010}
Andriy Myronenko and Xubo Song.
\newblock Point set registration: Coherent point drift.
\newblock \emph{IEEE Trans. Pattern Anal. Mach. Intell.}, 32\penalty0
  (12):\penalty0 2262--2275, 2010.

\bibitem[Nist{\'{e}}r(2004)]{Nister2004}
David Nist{\'{e}}r.
\newblock An efficient solution to the five-point relative pose problem.
\newblock \emph{{IEEE} Trans. Pattern Anal. Mach. Intell.}, 26\penalty0
  (6):\penalty0 756--777, 2004.

\bibitem[Oliveira and Tavares(2014)]{Francisco2014}
Francisco~P.M. Oliveira and João Manuel~R.S. Tavares.
\newblock Medical image registration: A review.
\newblock \emph{Comput Methods Biomech Biomed Engin .}, 17\penalty0
  (2):\penalty0 73--93, 2014.

\bibitem[Remondino and Fraser(2006)]{Remondino2006}
Fabio Remondino and Clive Fraser.
\newblock Digital camera calibration methods: {C}onsiderations and comparisons.
\newblock \emph{Int. Arch. Photogramm. Remote Sens.}, 36\penalty0 (5):\penalty0
  266--272, 2006.

\bibitem[Rosten and Drummond(2006)]{Rosten2006}
Edward Rosten and Tom Drummond.
\newblock Machine learning for high-speed corner detection.
\newblock In \emph{Proceedings of the 9th European Conference on Computer
  Vision (ECCV), Part I}, pages 430--443, Graz, Austria, 2006.

\bibitem[Sch\"{o}nemann(1970)]{Schonemann1970}
Peter~H. Sch\"{o}nemann.
\newblock On metric multidimensional unfolding.
\newblock \emph{Psychometrika}, 35\penalty0 (3):\penalty0 349--366, 1970.

\bibitem[Sharp and Ovsjanikov(2020)]{Sharp2020}
Nicholas Sharp and Maks Ovsjanikov.
\newblock Point{T}ri{N}et: {L}earned {T}riangulation of 3{D} point sets.
\newblock In \emph{Proceedings of the 16th European Conference on Computer
  Vision (ECCV), Part XXIII}, pages 762--778, Glasgow, UK, 2020.

\bibitem[Smith et~al.(2016)Smith, Carrivick, and Quincey]{Smith2016}
Mark~William Smith, Jonathan~L. Carrivick, and Duncan~J. Quincey.
\newblock Structure from {M}otion {P}hotogrammetry in {P}hysical {G}eography.
\newblock \emph{Progress in Physical Geography}, 40\penalty0 (2):\penalty0
  247--275, 2016.

\bibitem[Stew{\'{e}}nius et~al.(2005)Stew{\'{e}}nius, Schaffalitzky, and
  Nist{\'{e}}r]{Stewenius2005}
Henrik Stew{\'{e}}nius, Frederik Schaffalitzky, and David Nist{\'{e}}r.
\newblock How hard is 3-view triangulation really?
\newblock In \emph{Proceedings of the 10th {IEEE} International Conference on
  Computer Vision (ICCV)}, pages 686--693, Beijing, China, 2005.

\bibitem[Swoboda et~al.(2019)Swoboda, Kainm{\"{u}}ller, Mokarian, Theobalt, and
  Bernard]{Swoboda2019}
Paul Swoboda, Dagmar Kainm{\"{u}}ller, Ashkan Mokarian, Christian Theobalt, and
  Florian Bernard.
\newblock A convex relaxation for multi-graph matching.
\newblock In \emph{Proceedings of the {IEEE} Conference on Computer Vision and
  Pattern Recognition (CVPR)}, pages 11156--11165, Long Beach, CA, 2019.

\bibitem[Torgerson(1952)]{Torgerson1952}
Warren~S. Torgerson.
\newblock Multidimensional scaling: {I}. theory and method.
\newblock \emph{Psychometrika}, 17\penalty0 (4):\penalty0 401--419, 1952.

\bibitem[Torresani et~al.(2012)Torresani, Kolmogorov, and
  Rother]{Torresani2012}
Lorenzo Torresani, Vladimir Kolmogorov, and Carsten Rother.
\newblock A dual decomposition approach to feature correspondence.
\newblock \emph{IEEE Trans. Pattern Anal. Mach. Intell.}, 35\penalty0
  (2):\penalty0 259--271, 2012.

\bibitem[Triggs et~al.(1999)Triggs, McLauchlan, Hartley, and
  Fitzgibbon]{Triggs1999}
Bill Triggs, Philip~F. McLauchlan, Richard~I. Hartley, and Andrew~W.
  Fitzgibbon.
\newblock Bundle adjustment - {A} modern synthesis.
\newblock In \emph{Proceedings of the International Workshop on Vision
  Algorithms (ICCV)}, pages 298--372, Corfu, Greece, 1999.

\bibitem[Velasco et~al.(2016)Velasco, Pizarro, Guarasa, and Asaei]{Velasco2016}
Jos{\'{e}}~F. Velasco, Daniel Pizarro, Javier~Mac{\'{\i}}as Guarasa, and
  Afsaneh Asaei.
\newblock {TDOA} matrices: Algebraic properties and their application to robust
  denoising with missing data.
\newblock \emph{{IEEE} Trans. Signal Process.}, 64\penalty0 (20):\penalty0
  5242--5254, 2016.

\bibitem[Wang et~al.(2007)Wang, Thorpe, Thrun, Hebert, and
  Durrant{-}Whyte]{Wang2007}
Chieh{-}Chih Wang, Charles~E. Thorpe, Sebastian Thrun, Martial Hebert, and
  Hugh~F. Durrant{-}Whyte.
\newblock Simultaneous {L}ocalization, {M}apping and {M}oving {O}bject
  {T}racking.
\newblock \emph{Int. J. Robotics Res.}, 26\penalty0 (9):\penalty0 889--916,
  2007.

\bibitem[Wei and Ma(1994)]{Wei1994}
Guo-Qing Wei and Song~De Ma.
\newblock Implicit and explicit camera calibration: theory and experiments.
\newblock \emph{IEEE Trans. Pattern Anal. Mach. Intell.}, 16\penalty0
  (5):\penalty0 469--480, 1994.

\bibitem[Weng et~al.(1992)Weng, Cohen, and Herniou]{Weng1992}
Juyang Weng, Paul~R. Cohen, and Marc Herniou.
\newblock Camera calibration with distortion models and accuracy evaluation.
\newblock \emph{{IEEE} Trans. Pattern Anal. Mach. Intell.}, 14\penalty0
  (10):\penalty0 965--980, 1992.

\bibitem[Yan et~al.(2016)Yan, Cho, Zha, Yang, and Chu]{Yan2016}
Junchi Yan, Minsu Cho, Hongyuan Zha, Xiaokang Yang, and Stephen~M. Chu.
\newblock Multi-graph matching via affinity optimization with graduated
  consistency regularization.
\newblock \emph{IEEE Trans. Pattern Anal. Mach. Intell.}, 38\penalty0
  (6):\penalty0 1228--1242, 2016.

\bibitem[Yang et~al.(2019)Yang, Fang, Zhao, and Deng]{yang2019}
Kui Yang, Wei Fang, Yan Zhao, and Nianmao Deng.
\newblock Iteratively reweighted midpoint method for fast multiple view
  triangulation.
\newblock \emph{IEEE Robot. Autom. Lett.}, 4\penalty0 (2):\penalty0 708--715,
  2019.

\bibitem[Zach(2014)]{Zach2014}
Christopher Zach.
\newblock Robust bundle adjustment revisited.
\newblock In \emph{Proceedings of the 13th European Conference on Computer
  Vision (ECCV), Part V}, pages 772--787, Zurich, Switzerland, 2014.

\bibitem[Zhang(2000)]{Zhang2000}
Zhengyou Zhang.
\newblock A flexible new technique for camera calibration.
\newblock \emph{{IEEE} Trans. Pattern Anal. Mach. Intell.}, 22\penalty0
  (11):\penalty0 1330--1334, 2000.

\bibitem[Zheng et~al.(2017)Zheng, Yang, and Tian]{Zheng2017}
Liang Zheng, Yi~Yang, and Qi~Tian.
\newblock {SIFT} meets {CNN}: A decade survey of instance retrieval.
\newblock \emph{IEEE Trans. Pattern Anal. Mach. Intell.}, 40\penalty0
  (5):\penalty0 1224--1244, 2017.

\bibitem[Zheng et~al.(1999)Zheng, Wang, and Teoh]{Zheng1999}
Zhiqiang Zheng, Han Wang, and Eam~Khwang Teoh.
\newblock Analysis of gray level corner detection.
\newblock \emph{Pattern Recognit. Lett.}, 20\penalty0 (2):\penalty0 149--162,
  1999.

\bibitem[Zhou and la~Torre(2013)]{Zhou2013}
Feng Zhou and Fernando~De la~Torre.
\newblock Deformable graph matching.
\newblock In \emph{Proceedings of the {IEEE} Conference on Computer Vision and
  Pattern Recognition (CVPR)}, pages 2922--2929, Portland, OR, 2013.

\bibitem[Zhou et~al.(2019)Zhou, Ma, Liang, Yuhe, Xu, Ong, and Yang]{Zhou2019}
Jie Zhou, Xinke Ma, Li~Liang, Liu Yuhe, Shijin Xu, Sim~Heng Ong, and Yang Yang.
\newblock Robust variational bayesian point set registration.
\newblock In \emph{Proceedings of the 2019 {IEEE/CVF} International Conference
  on Computer Vision (ICCV)}, pages 9904--9913, Seoul, Korea, 2019.

\bibitem[Ziou and Tabbone(1998)]{Ziou1998}
Djemel Ziou and Salvatore Tabbone.
\newblock Edge detection techniques - an overview.
\newblock \emph{Pattern Recognition and Image Analysis}, 8:\penalty0 537--559,
  1998.

\bibitem[Zollh{\"o}fer et~al.(2018)Zollh{\"o}fer, Stotko, G{\"o}rlitz,
  Theobalt, Nie{\ss}ner, Klein, and Kolb]{Zollhofer2018}
Michael Zollh{\"o}fer, Patrick Stotko, Andreas G{\"o}rlitz, Christian Theobalt,
  Matthias Nie{\ss}ner, Reinhard Klein, and Andreas Kolb.
\newblock State of the art on 3d reconstruction with rgb-d cameras.
\newblock In \emph{Comput Graph Forum}, volume~37, pages 625--652. Wiley Online
  Library, 2018.

\end{thebibliography}

\end{document}